\def\doctype{2}
\def\tsubmission{1}
	\newcommand{\full}[1]{}
	\newcommand{\submit}[1]{#1}
	\newcommand{\full}[1]{#1}
	\newcommand{\submit}[1]{}
\newtheorem{theorem}{Theorem}[section]
\newtheorem{claim}[theorem]{Claim}
\newcommand{\claimproof}[2]%
{\noindent{\em Proof of Claim \ref{#1}.}
#2\hspace*{\fill}$\Box$~~~~~\vspace{5mm} }
\newtheorem{definition}[theorem]{Definition}
\newtheorem{lemma}[theorem]{Lemma}
\newtheorem{prop}[theorem]{Proposition}
\newcommand{\Sec}[1]{\hyperref[sec:#1]{\S\ref*{sec:#1}}} 
\newcommand{\Eqn}[1]{\hyperref[eq:#1]{(\ref*{eq:#1})}} 
\newcommand{\Fig}[1]{\hyperref[fig:#1]{Figure\,\ref*{fig:#1}}} 
\newcommand{\Tab}[1]{\hyperref[tab:#1]{Table\,\ref*{tab:#1}}} 
\newcommand{\Thm}[1]{\hyperref[thm:#1]{Theorem\,\ref*{thm:#1}}} 
\newcommand{\Lem}[1]{\hyperref[lem:#1]{Lemma\,\ref*{lem:#1}}} 
\newcommand{\Prop}[1]{\hyperref[prop:#1]{Prop.~\ref*{prop:#1}}} 
\newcommand{\Cor}[1]{\hyperref[cor:#1]{Corollary~\ref*{cor:#1}}} 
\newcommand{\Def}[1]{\hyperref[def:#1]{Definition~\ref*{def:#1}}} 
\newcommand{\Alg}[1]{\hyperref[alg:#1]{Alg.~\ref*{alg:#1}}} 
\newcommand{\Ex}[1]{\hyperref[ex:#1]{Ex.~\ref*{ex:#1}}} 
\newcommand{\Clm}[1]{\hyperref[clm:#1]{Claim~\ref*{clm:#1}}} 
\newcommand{\eqdef}{:=}
\newcommand{\etal}{et al.}
\newcommand{\vis}{\textrm{vis}}
\newcommand{\bS}{\textbf{S}}
\newcommand{\bT}{\textbf{T}}
\newcommand{\bU}{\textbf{U}}
\newcommand{\cD}{\mathcal{D}}
\newcommand{\cE}{\mathcal{E}}
\newcommand{\cF}{\mathcal{F}}
\newcommand{\cR}{\mathcal{R}}
\newcommand{\cT}{\mathcal{T}}
\newcommand{\eps}{\varepsilon}
\newcommand{\EX}{\hbox{\bf E}}
\newcommand{\R}{\mathbb{R}}
\newcommand{\OPT}{\text{OPT}}
\newcommand{\wlambda}{\widehat{\lambda}}
\newcommand{\wolfgang}[1][says]{WOLFGANG: {#1}}
\begin{document}
%
%
\title{Self-improving Algorithms for Coordinate-wise Maxima}

\author{
Kenneth L. Clarkson\thanks{%
  IBM Almaden Research Center, San Jose, USA.
  Email: {\tt klclarks@us.ibm.com}
}
\and
Wolfgang Mulzer\thanks{%
  Institut f\"ur Informatik, 
  Freie Universit\"at Berlin, Berlin, Germany.
  Email: {\tt mulzer@inf.fu-berlin.de}. 
}
\and
C. Seshadhri\thanks{%
Sandia National Labs, Livermore, USA.
Email: {\tt scomand@sandia.gov}
}
}

%
%
%
%
\maketitle
\begin{abstract}
Computing the coordinate-wise maxima of a planar point set
is a classic and well-studied problem in computational geometry. 
We give an algorithm for this problem in the \emph{self-improving setting}. 
We have $n$ (unknown) independent  distributions
$\cD_1, \cD_2, \ldots, \cD_n$ of planar points. An input pointset
$(p_1, p_2, \ldots, p_n)$
is generated by taking an independent sample $p_i$ from each $\cD_i$, so 
the input distribution $\cD$ is the product $\prod_i \cD_i$. A self-improving
algorithm repeatedly gets input sets from the distribution $\cD$ (which is 
\emph{a priori} unknown) and tries to
optimize its running time for $\cD$. Our algorithm uses the first few
inputs to learn salient features of the distribution, and then becomes an 
optimal algorithm 
for distribution $\cD$. Let $\OPT_\cD$ denote the expected depth of 
an \emph{optimal} linear comparison tree computing the maxima for 
distribution $\cD$. Our algorithm
eventually has an expected running time of $O(\text{OPT}_\cD + n)$, 
even though it did not know $\cD$ to begin with.

Our result requires new tools to understand linear comparison trees 
for computing maxima. 
We show how to convert general linear comparison trees to very restricted 
versions, which can then be
related to the running time of our algorithm. An interesting feature of our 
algorithm
is an interleaved search, where the algorithm tries to determine the likeliest
point to be maximal with minimal computation. This allows the running time to be
truly optimal for the distribution $\cD$.
\end{abstract}



\keywords{Coordinate-wise maxima; Self-improving algorithms} 

\section{Introduction}\label{sec:intro}

Given a set $P$ of $n$ points in the plane, the \emph{maxima}
problem is to find those points $p\in P$
for which no other point in $P$ has a larger $x$-coordinate and a larger
$y$-coordinate. More formally, for $p\in\R^2$, let $x(p)$ and $y(p)$ denote the
$x$ and $y$ coordinates of $p$. Then $p'$ \emph{dominates} $p$ if and only 
if $x(p')\ge
x(p)$, $y(p')\ge y(p)$, and one of these inequalities is strict. The desired
points are those in $P$ that are not dominated by any other points in $P$.
The set of maxima is also known as a \emph{skyline} 
in the database literature \cite{skyline} and as a \emph{Pareto frontier}.

This algorithmic problem has been studied since at least 1975 \cite{KLP}, when 
Kung~\etal~described an algorithm with an $O(n\log n)$ worst-case time and 
gave an $\Omega(n\log n)$
lower bound. Results since
then include average-case running times of $n+\tilde{O}(n^{6/7})$ point-wise
comparisons \cite{Golin}; output-sensitive algorithms needing $O(n\log h)$ time
when there are $h$ maxima \cite{KS}; and algorithms operating in
external-memory models \cite{GTVV}. 
A major problem with worst-case analysis is that it may not reflect the behavior
of real-world inputs. Worst-case algorithms are tailor-made for extreme
inputs, none of which may occur (with reasonable frequency) in practice. Average-case
analysis tries to address this problem by assuming some fixed distribution on inputs;
for maxima, the property of coordinate-wise independence covers a broad
range of inputs, and allows a clean analysis~\cite{Buchta}, but is 
unrealistic even so.
The right distribution to analyze remains a point of investigation.
Nonetheless, the assumption of randomly
distributed inputs is very natural and one worthy of further research.\\

{\bf The self-improving model.} Ailon~\etal~introduced the self-improving model
 to address this issue \cite{ACCL}. In this model, there is some fixed but unknown input distribution
$\cD$ that generates independent inputs, that is, whole input sets $P$. The algorithm initially
undergoes a \emph{learning phase}, where it processes inputs with a worst-case
guarantee but tries to learn information about $\cD$. The aim
of the algorithm is to become optimal \emph{for the distribution $\cD$}.
After seeing some (hopefully small) number of inputs, the algorithm shifts
into the \emph{limiting phase}. Now, the algorithm is tuned for $\cD$
and the expected running time is (ideally) optimal for $\cD$. A self-improving
algorithm can be thought of as an algorithm that attains the optimal average-case
running time for all, or at least a large class of, distributions $\cD$. 

Following earlier self-improving algorithms, we assume the input has
a product distribution. An input 
is a set of $n$ points $P = (p_1, p_2, \ldots, p_n)$ in the plane. Each $p_i$ is generated
independently from a distribution $\cD_i$, so the probability distribution
of $P$ is the product $\prod_i \cD_i$. The $\cD_i$s themselves
are arbitrary, and the only assumption made is their independence.
There are lower bounds showing that some restriction on $\cD$ is
necessary for a reasonable self-improving algorithm, as we explain later.

The first self-improving algorithm was for sorting;
this was extended to Delaunay triangulations,
with these results eventually merged \cite{CS_self_improve,AilonCCLMS11}. A
self-improving algorithm for planar convex hulls was given by 
Clarkson~\etal~\cite{CMS_self_improve}, however their analysis was recently
discovered to be flawed.

\subsection{Main result} \label{sec:result}

Our main result is a self-improving algorithm for planar coordinate-wise maxima
over product distributions. We need some basic definitions before stating 
our main theorem. We explain what it means for a maxima algorithm
to be optimal for a distribution $\cD$. This in turn requires
a notion of \emph{certificates} for maxima, which allow the correctness
of the output to be verified in $O(n)$ time.
Any procedure for computing maxima must provide some ``reason"
to deem an input point $p$ non-maximal. The simplest certificate would be to
provide an input point dominating $p$. Most current algorithms
implicitly give exactly such certificates \cite{KLP,Golin,KS}. 


\begin{definition}\label{def:cert}
A \emph{certificate} $\gamma$ has:
\textup(i\textup) the sequence of the indices of the maximal points, sorted from 
left to right; 
\textup(ii\textup) for each non-maximal point, a \emph{per-point certificate}
of non-maximality, which is simply the index of an input point that 
dominates it. 
We say that  a certificate $\gamma$ is \emph{valid} for an input $P$ 
if $\gamma$
satisfies these conditions for $P$.
\end{definition}

The model of computation that we use to define optimality is
a linear computation tree that generates query lines using
the input points. In particular,
our model includes the usual CCW-test that forms the
basis for many geometric algorithms. 

Let $\ell$ be a directed line. We use $\ell^+$ to denote the 
open halfplane
to the left of $\ell$ and $\ell^-$ to denote the open halfplane to
the right of $\ell$. 

\begin{definition} \label{def:opt} 
A \emph{linear comparison tree} $\cT$ is a binary
tree such that each node $v$ of $\cT$ is labeled with a query of the
form ``$p \in \ell_v^+?$''. Here $p$ denotes an input point and
$\ell_v$ denotes a directed line. The line $\ell_v$ can be obtained
in three ways:
\textup(i\textup) it can be a line independent of the input \textup(but dependent
on the node $v$\textup);
\textup(ii\textup) it can be a line with a slope independent of the 
input \textup(but dependent on $v$\textup) passing through a given
input point;
\textup(iii\textup) it can be a line through an input point and through a 
point $q$ independent of the
input \textup(but dependent on $v$\textup);
\textup(iv\textup) it can be the line defined by two
distinct input points. 
A linear comparison tree is \emph{restricted} if it only makes
queries of type \textup(i\textup).

A linear comparison tree $\cT$ \emph{computes the maxima}
for $P$ if each leaf corresponds to a certificate. This means that
each leaf $v$ of $\cT$ is labeled with a certificate $\gamma$ that
is valid for every possible input $P$ that reaches $v$.
\end{definition}

Let $\cT$ be a linear comparison tree and $v$ be a node of $\cT$. 
Note that $v$ corresponds to a 
region $\mathcal{R}_v \subseteq \R^{2n}$ such that an evaluation of $\cT$ 
on input $P$ reaches $v$ if and only if $P \in \mathcal{R}_v$. 
If $\cT$ is \emph{restricted}, then $\mathcal{R}_v$
is the Cartesian product of a sequence
$ (R_1, R_2, \ldots, R_n)$ of polygonal regions.
The \emph{depth}
of $v$, denoted by $d_v$, is the length of the path from the root of $\cT$ 
to $v$. Given $\cT$, there exists exactly one leaf $v(P)$ that is reached 
by the evaluation of $\cT$ on input $P$. The \emph{expected depth} 
of $\cT$ over $\cD$, $d_\cD(\cT)$, 
is defined as $\EX_{P \sim \cD}[d_{v(P)}]$. 
Consider some comparison based algorithm $A$ that is modeled
by such a tree $\cT$. The expected depth of $\cT$ is a lower
bound on the number of comparisons performed by $A$.

Let $\bT$ be the set of trees that compute the maxima of $n$ points.
We define $\OPT_\cD = \inf_{\cT \in \bT} d_\cD(\cT)$. This is
a lower bound on the expected time taken by \emph{any} linear
comparison tree to compute the maxima of inputs distributed according
to $\cD$. We would like our algorithm to have a running time comparable 
to $\OPT_\cD$.

%
%
\begin{theorem} \label{thm:main} 
Let $\eps > 0$ be a fixed constant and $\cD_1$, $\cD_2$, $\ldots, \cD_n$ be
independent planar point distributions. The input distribution is $\cD = \prod_i \cD_i$.
There is a self-improving algorithm to compute the coordinate-wise maxima 
whose expected time in the limiting phase is $O(\eps^{-1}(n + \OPT_\cD))$. 
The learning phase lasts for $O(n^\eps)$ inputs and the space requirement is $O(n^{1+\eps})$.
\end{theorem}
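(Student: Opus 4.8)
I would split the proof into a structural half and an algorithmic half, linked by entropy. The structural half normalizes the adversary: I claim $\OPT_\cD$, though defined over arbitrary linear comparison trees, is captured up to a constant factor and an additive $O(n)$ by \emph{restricted} trees (only type~(i) queries). This matters because --- as noted right after \Def{opt} --- a restricted tree's reachable region at every node is a Cartesian product $R_1\times\cdots\times R_n$, so its expected depth decomposes coordinate by coordinate: if $c_i$ is the expected number of queries it makes about $p_i$, the expected depth is $\sum_i c_i$, and each $c_i$ is at least the Shannon entropy $H_i$ of the discrete ``cell'' of $p_i$ that the tree pins down on a random input. Hence an optimal restricted tree witnesses $\OPT_\cD \ge \Omega\bigl(\sum_i H_i\bigr) - O(n)$ once the reduction is in place, and the job of the algorithm is to match $\sum_i H_i$ up to the allowed $O(n)$ and $\eps^{-1}$ slack. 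The running theme is tightness: with only $O(n)$ additive room, neither the lower bound nor the algorithm may squander more than $O(1)$ amortized work on a coordinate the optimal tree barely inspects.

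\textbf{Taming general trees (the expected main obstacle).} Turning an arbitrary tree $\cT$, whose queries (types (ii)--(iv)) may use lines through input points, into a restricted tree of comparable expected depth is where I anticipate the real work. The guiding fact is that for the maxima problem only a bounded amount of information about each point is ever decisive --- morally, the point's side of the eventual staircase --- so an input-dependent line query can be emulated, on all but a negligible-probability set of inputs, by a short fixed sequence of input-independent queries that pin the two or three participating points down to just enough resolution to resolve it; the exceptional inputs are handled by an $O(n\log n)$ worst-case fallback contributing $o(1)$ to the expectation. Two points need care: the resolution needed must be bounded in terms of the \emph{combinatorial} structure of $\cT$, not the (arbitrary) geometry of $\cD$; and the emulation must not inflate the expected depth --- for the latter one exploits that a tree of expected depth $D$ inspects only $O(D)$ distinct points on a typical input, so the $O(1)$ per-inspected-point overhead sums to $O(D+n)$.

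\textbf{The algorithm.} The learning phase runs $\lambda=\Theta(n^{\eps})$ rounds. From one sample input I fix the reference slabs: its $n$ points sorted by $x$ define $n+1$ vertical strips, chosen so the closing sweep below costs $O(n)$ in expectation. Over the $\lambda$ samples I estimate, for each $i$, the distribution of $p_i$ over the strips (and its sign relative to the reference staircase inside its strip), and build a distribution-tuned search structure $T_i$: a near-optimal search tree of size $O(n^{\eps})$ over the strips that are ``frequent'' for $\cD_i$, chained to one shared balanced tree of size $O(n)$ over the ``rare'' strips; the total space is $O(n^{1+\eps})$ and a $T_i$-query costs $O(1+H_i)$ in expectation, $O(\log n)$ in the worst case. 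In the limiting phase I do \emph{not} locate every point and then compute the staircase. Instead --- the ``interleaved search'' of the introduction, and the other place tightness is delicate --- I advance the $n$ searches a little at a time in a priority order driven by which points are still plausibly maximal: the instant some $p_i$ is certified dominated by an already-located point of the current input it is retired with only $O(1)$ further charge, and a point is probed more deeply only while it remains a candidate, so that the work spent on $p_i$ matches, up to a constant and the rare-strip factor, what an optimal restricted tree spends on coordinate $i$. When the searches finish, one left-to-right sweep over the strips --- maintaining the largest-$y$ point seen so far over all strips to the right --- merges the located points into the true staircase and, for each non-maximal point $p$ in strip $s$, reads off a current-input dominator: either a point of strip $s$ whose coordinates are nowhere smaller and somewhere strictly larger than $p$'s, or the recorded largest-$y$ point over the strips strictly right of $s$.

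\textbf{Assembling the bounds.} Correctness is immediate: the closing sweep exhibits a certificate as in \Def{cert}, and the worst-case fallback produces a valid output whenever a tuned structure misfires. For the limiting-phase time, the interleaved search costs $O(n)$ plus the sum of the $T_i$-query costs; the frequent parts contribute $\sum_i O(1+H_i)=O\bigl(n+\sum_i H_i\bigr)=O(n+\OPT_\cD)$ by the structural half, while $\eps^{-1}$ enters only through the rare strips --- a rare strip has probability at most $n^{-\eps}$, so if $q_i$ is the probability that $p_i$ lands in a rare strip then $H_i\ge \eps\, q_i\log n$, whence the rare-strip queries contribute $\sum_i q_i\cdot O(\log n)=O\bigl(\eps^{-1}\sum_i H_i\bigr)=O(\eps^{-1}(n+\OPT_\cD))$; with the $O(n)$ sweep this is the claimed $O(\eps^{-1}(n+\OPT_\cD))$. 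For the learning phase, $\lambda=\Theta(n^{\eps})$ samples suffice by Chernoff bounds to estimate to within a constant factor every strip probability above $n^{-\eps}$ --- all the construction of the $T_i$ requires --- and a routine robustness argument shows the limiting guarantee survives the estimation error, so $O(n^{\eps})$ rounds of learning are enough; the $O(n^{1+\eps})$ space was accounted for in the $T_i$.
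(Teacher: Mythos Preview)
Your high-level architecture matches the paper's --- reduce general linear comparison trees to restricted ones, build a slab structure $\bS$ and per-coordinate search trees $T_i$ in the learning phase, and run an interleaved search that retires a point once it is seen to be dominated. But the optimality argument has a real gap, and the paper fills it with machinery you do not mention.

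The gap is the sentence ``the work spent on $p_i$ matches, up to a constant and the rare-strip factor, what an optimal restricted tree spends on coordinate $i$.'' You assert this but give no mechanism, and your symbol $H_i$ is doing double duty. In the structural half, $H_i$ is the entropy of the cell $R_i$ that the \emph{optimal} restricted tree pins down for $p_i$; this cell depends on the whole input $P$, since the path through the optimal tree does. In the algorithmic half you silently switch to bounding the $T_i$-search cost, which a priori is governed by the entropy of $p_i$'s \emph{slab} under $\cD_i$ alone. These can differ arbitrarily: a point that is almost always dominated may land in a uniformly random slab (large slab entropy) while the optimal tree barely touches it (tiny $H_i$). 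Early retirement is supposed to bridge this, but \emph{why} is a dominator already located by the time $p_i$'s search has gone as deep as the optimal tree would? Your ``priority order driven by which points are still plausibly maximal'' does not specify an order, and the inequality $H_i\ge\eps\,q_i\log n$ you use for the rare-strip term is again a statement about slab entropy, not about the optimal tree's $H_i$.

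The paper closes this with three ingredients. First, an \emph{$\bS$-labeled} strengthening of certificates (\Def{s-label}, \Lem{comp}): at additive $O(n)$ expected depth, the optimal tree can be assumed to leave every non-maximal region $R_i$ either inside a single leaf slab or separated from its dominating region $R_j$ by a slab boundary. Second, a specific processing order --- by the \emph{right boundary} of the current slab --- with the invariant (\Clm{order}) that when $p_i$'s search has narrowed to the smallest slab $S_i\supseteq R_i$, every maximum strictly right of $S_i$, in particular $p_j$, has already been found; so $p_i$ is retired exactly then, and its search is an $S_i$-restricted search. Third, a \emph{restricted-search} lemma (\Lem{search-time}): a $\mu$-reducing $T_i$ built for $\cD_i$ is automatically near-optimal for $S_i$-restricted searches under the conditional distribution $p_i\mid p_i\in R_i$, costing $O(-\log\Pr[p_i\in R_i])$. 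The analysis is then leaf-by-leaf over an \emph{entropy-sensitive} refinement of the restricted tree (\Lem{entropyTree}, \Lem{entropyDepth}), where $d_v=-\sum_i\log\Pr[p_i\in R_i]$, giving conditional cost $O(n+d_v)$ at each leaf $v$ (\Clm{search}). A single global $\sum_i H_i$ does not supply this link.

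A smaller point: your reduction from general to restricted trees --- emulate on all but a negligible set, fall back to $O(n\log n)$ otherwise --- is different from and vaguer than the paper's. \Clm{node} replaces each type-(ii)/(iii)/(iv) node directly by a tree of strictly simpler comparisons with $O(1)$ expected depth (halving lines for type~(ii), cone families about the fixed point for type~(iii), cuttings for type~(iv)), and then sums node by node; no fallback is needed and the blowup is a clean constant factor.
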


There are lower bounds in \cite{AilonCCLMS11} (for sorters) implying that a self-improving
maxima algorithm that works for all distributions requires exponential storage, and
that the time-space tradeoff (wrt $\eps$) in the above theorem is optimal.

\begin{figure*}[tb!]
\begin{center}
  \includegraphics[scale=01]{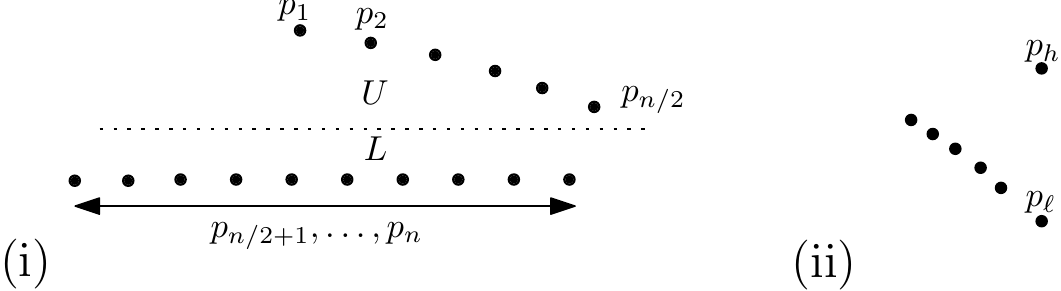} 
\end{center}
\caption{Examples of difficult distributions}
\label{fig:bad}
\end{figure*}

\textbf{Challenges.} One might think that since self-improving sorters
are known, an algorithm for maxima should follow directly. But this reduction
is only valid for $O(n\log n)$ algorithms. Consider Figure~\ref{fig:bad}(i). 
The distributions $\cD_1$, $\cD_2$, $\ldots, \cD_{n/2}$
generate the fixed points shown. The remaining distributions generate a random
point from a line below $L$. Observe that an algorithm that wishes to sort
the $x$-coordinates requires $\Omega(n\log n)$ time. On the other hand, there is a simple
comparison tree that determines the maxima in $O(n)$ time. For all $p_j$ where  $j > n/2$,
the tree simply checks if $p_{n/2}$ dominates $p_j$. After that, it performs a linear
scan and outputs a certificate.

We stress that even though the points are independent, the collection 
of maxima exhibits strong dependencies.
In Figure~\ref{fig:bad}(ii), suppose a distribution $\cD_i$ generates
either $p_h$ or $p_\ell$; if $p_\ell$ is chosen, we must consider the 
dominance relations among the remaining points,
while if $p_h$ is chosen, no such evaluation is required.
The optimal search tree for a distribution $\cD$ must exploit this  complex dependency.

Indeed, arguing about optimality is one of the key contributions of this work. Previous
self-improving algorithms employed information-theoretic optimality arguments. These
are extremely difficult to analyze for settings like maxima, where some points are more 
important to process that others, as in Figure~\ref{fig:bad}. (The main error
in the self-improving convex hull paper  \cite{CMS_self_improve}
was an incorrect consideration of dependencies.)
We focus on a somewhat weaker notion of optimality---linear comparison 
trees---that
nonetheless covers most (if not all) important algorithms for maxima.

In Section~\ref{sec:model}, we describe how to convert linear comparison trees into
restricted forms that use much more structured (and simpler) queries.
Restricted trees are much more amenable to analysis. In some
sense, a restricted tree decouples the individual
input points and makes the maxima computation amenable to separate
$\cD_i$-optimal searches.
A leaf of a restricted tree is associated with a sequence of polygons
$(R_1,R_2,\ldots, R_n)$ such that the leaf is visited if and only if every $p_i\in R_i$,
and conditioned on that event, the $p_i$ remain independent.
This independence is extremely important for the analysis.
We design an algorithm whose
behavior can be related to the restricted tree. Intuitively, if the algorithm spends
many comparisons involving a single point, then we can argue that the optimal
restricted tree must also do the same. We give more details about the algorithm
in Section~\ref{sec:outline}.

\subsection{Previous work} \label{sec:prev}
Afshani~\etal~\cite{AfshaniBC09} introduced a model of \emph{instance-optimality}
applying to algorithmic problems including planar convex hulls and
maxima. (However, their model is different from, and in a sense weaker than,
the prior notion of instance-optimality introduced by Fagin~\etal~\cite{FLM}.)
All previous (e.g., output sensitive and instance optimal) algorithms require expected $\Omega(n\log n)$ time for the distribution 
given in Figure~\ref{fig:bad}, though an optimal self-improving algorithm
only requires $O(n)$ expected time. (This was also discussed in \cite{CMS_self_improve} with a similar example.), 

We also mention the paradigm of preprocessing regions 
in order to compute certain geometric
structures 
faster (see, e.g., 
\cite{BuchinLoMoMu11,EzraMu11,HeldMi08,LoefflerSn10,vKreveldLoMi10}). 
Here, we are given a set $\mathcal{R}$ of planar regions, and we  
would like to preprocess $\mathcal{R}$ 
in order to quickly find the (Delaunay) triangulation (or convex hull) 
for any point set 
which contains exactly one point from each region in $\mathcal{R}$.  
This setting is 
adversarial, but if we only consider point sets where a point is randomly 
drawn from each region, it can be regarded as a special case of our 
setting. In this view, these results give us bounds on the running
time a self-improving algorithm can achieve if $\mathcal{D}$ draws 
its points from disjoint planar regions. 

\subsection{Preliminaries and notation}

Before we begin, let us define some basic concepts and agree on a
few notational conventions. 
We use $c$ for a sufficiently large constant, and we 
write $\log x$ to denote the logarithm of $x$ in base 2.
All the probability distributions are assumed to be
continuous. (It is not necessary to do this, but it makes
many calculations a lot simpler.)

Given a polygonal region $R \subseteq \R^2$ and a 
probability distribution $\cD$ on the plane, we call $\ell$ a
\emph{halving line} for $R$ (with respect to $\cD$) if
\[
\Pr_{p \sim \cD}[p \in \ell^+ \cap R]
= 
\Pr_{p \sim \cD}[p \in \ell^- \cap R].
\]
Note that if $\Pr_{p \sim \cD}[p \in R] = 0$, 
every line is a halving line for $R$. If not, a halving line
exactly halves the conditional probability for $p$ being in
each of the corresponding halfplanes, conditioned on $p$ lying
inside $R$.

Define a \emph{vertical slab structure} $\bS$ as
a sequence of vertical lines partitioning the 
plane into vertical regions, called \emph{leaf slabs}. 
(We will consider the latter to be the open regions between the vertical lines.
Since we assume that our distributions are continuous, we abuse
notation and consider the leaf slabs to partition the plane.)
More generally, a \emph{slab} is the region between
any two vertical lines of the $\bS$.
The \emph{size} of the slab structure is the number of
leaf slabs it contains. We denote it by $|\bS|$. 
Furthermore, for any slab $S$, the probability that $p_i \sim \cD_i$
is in $S$ is denoted by $q(i,S)$.

A \emph{search tree} $T$ over $\bS$ is a comparison tree that locates a point 
within leaf slabs of $\bS$. Each internal node compares the $x$-coordinate of 
the point with a vertical line of $\bS$, and moves left or right accordingly.
We associate each internal node $v$ with a slab $S_v$ (any point in $S_v$
will encounter $v$ along its search).

\subsection{Tools from self-improving algorithms} \label{sec:tools}

We introduce some tools that were developed in previous self-improving results.
The ideas are by and large old, but our presentation in this form
is new. We feel that the following statements (especially \Lem{search-time})
are of independent interest. 

We define the notion of \emph{restricted searches},
introduced in \cite{CMS_self_improve}. This notion is central to our final
optimality proof. (The lemma and formulation as given here are new.)
Let $\bU$ be an ordered set and $\cF$ be a distribution
over $\bU$. For any element $j \in \bU$, $q_j$
is the probability of $j$ according to $\cF$. For any interval
$S$ of $\bU$, the total probability of $S$ is $q_S$.

We let $T$ denote a search tree over $\bU$. It will be convenient
to think of $T$ as (at most) ternary, where each node has at most $2$ children
that are internal nodes.
In our
application of the lemma, $\bU$ will just be the set
of leaf slabs of a slab structure $\bS$. 
We now introduce some definitions regarding restricted searches and search trees.


\begin{definition} \label{def:rest}
Consider a distribution $\cF$ and an interval $S$ of $U$. 
An \emph{$S$-restricted distribution} is given by
the probabilities (for element $r \in U$) $q'_r/\sum_{j \in U} q'_j$, where
the sequence $\{q'_j | j \in U\}$ has the following property.
For each $j \in S$, $0 \leq q'_j \leq q_j$.
For every other $j$, $q'_j = 0$.  

Suppose $j \in S$. An \emph{$S$-restricted search} is a search for $j$ in $T$
that terminates once $j$ is located in any interval contained in $S$.

\end{definition}

For any sequence of numbers $\{q'_j | j \in U\}$ and $S \subseteq U$, we use $q'_S$
to denote $\sum_{j \in S} q'_j$.

\begin{definition} \label{def:tree}
Let $\mu \in (0,1)$ be a parameter.
A search tree $T$ over $\bU$ is $\mu$-reducing if: for any internal node 
$S$ and for any non-leaf child $S'$ of $S$,
$q_{S'} \leq \mu q_S$. 

A search tree $T$ is \emph{$c$-optimal for restricted searches over $\cF$} if: for all
$S$ and $S$-restricted distributions $\cF_S$, the expected time of an $S$-restricted 
search over $\cF_S$ is at most $c(-\log q'_S + 1)$. (The probabilities $q'$ are
as given in \Def{rest}.)
\end{definition}

We give the main lemma about restricted searches. 
A tree that is optimal for searches over $\cF$ also works for
restricted distributions. 
\submit{The proof is given in the full version of the paper.}
\full{The proof is given in Appendix~\ref{app:restricted}.}

\begin{lemma} \label{lem:search-time} Suppose $T$
is a $\mu$-reducing search tree for $\cF$.
Then $T$ is $O(1/\log(1/\mu))$-optimal for restricted searches over $\cF$.
\end{lemma}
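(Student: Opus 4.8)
The plan is to fix an internal node $S$ of $T$, an $S$-restricted distribution $\cF_S$ with weights $q'$, and analyze an $S$-restricted search, which starts at the root and stops as soon as the current search node becomes an interval contained in $S$. First I would observe that from the root down to the first node fully contained in $S$, the search proceeds exactly as an ordinary search in $T$: at each node it compares against a splitting line and descends into one of at most two internal children (plus possibly leaf children, which we need not enter for a restricted search since a leaf contained in $S$ terminates the search, and a leaf not contained in $S$ cannot be reached by $j \in S$ unless it already sits above a node contained in $S$). So the cost of the restricted search is the depth, in $T$, of the shallowest ancestor of the target that is contained in $S$ — call this node $A(j)$ for target $j$, though really $A(j)$ depends only on which child-of-$S$-boundary branch we take, so there are only a bounded number of distinct such nodes.

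Next I would use the $\mu$-reducing property to bound that depth. Walking from the root to $A(j)$, every step into a non-leaf child multiplies the $q$-mass by at most $\mu$; but once we reach a node $S'$ that is a child of some node and is contained in $S$, we stop. The key point: the parent of $A(j)$ is an ancestor of (or equal to) $S$ or overlaps $S$'s boundary, and along the path the mass is at least $q_{A(j)} \ge q'_{A(j)} \ge q'_j$ when $A(j)$ contains $j$ — more usefully, since $A(j) \subseteq S$ we have $q_{A(j)} \le q_S$, and since the restricted search only needs to reach \emph{some} interval inside $S$, we can instead bound the number of $\mu$-reducing steps needed to shrink the mass from $q_{\text{root}} = 1$ down to something comparable to $q'_S$. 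Concretely: let $d$ be the depth of $A(j)$. Among the $d$ edges on the root-to-$A(j)$ path, all but possibly the last few are edges into non-leaf children, each cutting the mass by a factor $\le \mu$; hence $q_{A(j)} \le \mu^{d - O(1)}$. Since $A(j) \supseteq \{j\}$ is reached only because it is the first node inside $S$, its parent properly contains part of $S^c$, so $q_{A(j)}$ is at least a constant fraction of $q'_S$ is \emph{not} quite right — instead I would argue $q_{A(j)} \ge q'_{A(j)}$ and sum over the $O(1)$ possible nodes $A$ to get that the expected stopping depth is $O(1) + \sum_A \Pr_{\cF_S}[A(j)=A]\cdot \log(1/q_A)$, then bound each $\log(1/q_A) \le \log(1/q'_S) + O(1)$ using $q'_S \le q_A \cdot (\text{const})$ — this last inequality holding because the $O(1)$ nodes $A$ together cover $S$ and each carries $\mu$-comparable mass by the reducing property applied at $S$.

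Converting the $\mu$-reducing depth bound into the stated constant: $d - O(1) \le \log_\mu q_{A(j)} = \log(1/q_{A(j)}) / \log(1/\mu) \le (\log(1/q'_S) + O(1))/\log(1/\mu)$, and taking expectations over $j \sim \cF_S$ gives expected restricted-search time $\le O(1/\log(1/\mu))\,(-\log q'_S + 1)$, which is exactly $c$-optimality with $c = O(1/\log(1/\mu))$. The sum over the $O(1)$ terminal nodes $A$ only costs a constant factor absorbed into the $O(\cdot)$.

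The main obstacle I anticipate is the bookkeeping at the point where the search first enters $S$: the node $A(j)$ where the restricted search stops is a child of a node that straddles a boundary of $S$, and I need to be careful that (a) there are only a bounded number of such stopping nodes (this follows because $S$ has two endpoints, so the root-to-somewhere path can cross into $S$ in only a controlled way — essentially the nodes whose interval is contained in $S$ but whose parent is not form an ``upper envelope'' of $O(1)$ nodes along the two boundary search paths), and (b) each such stopping node $A$ has $q_A \ge \Omega(\mu \cdot q'_S)$ or at least $\sum_A q_A \ge q'_S$ with each individual $q_A$ not too small relative to $q'_S$, so that $\log(1/q_A) \le \log(1/q'_S) + O(\log(1/\mu)) + O(1)$. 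Getting this comparison clean — i.e., that we cannot be forced deep into the tree \emph{before} first landing inside $S$ when $q'_S$ itself is not tiny — is where the $\mu$-reducing hypothesis does the real work, and it is the step I would write most carefully.
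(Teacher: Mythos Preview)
Your proposal has a genuine gap: both claims (a) and (b) in your final paragraph are false, and the argument does not go through without them.

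For claim (a): the set of stopping nodes---nodes $A$ contained in $S$ whose parent is not---is \emph{not} $O(1)$ in general. Take a balanced binary tree on $\bU=\{1,\ldots,2^k\}$ under the uniform distribution (so $\mu=1/2$) and let $S=\{2,\ldots,2^k\}$. Walking down the left spine from the root, at every level the right child is fully inside $S$ (hence a stopping node) while the left child remains anchored. This produces $\Theta(k)$ distinct stopping nodes, not $O(1)$. You are right that the stopping nodes lie along the two boundary paths, but there is one per level on each path, not a bounded number in total.

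For claim (b): in the same example the deepest stopping node has $q_A=2^{1-k}$, while $q'_S$ can be taken as large as $1-2^{-k}$. Thus $\log(1/q_A)=\Theta(k)$ whereas $\log(1/q'_S)=O(1)$, so the inequality $\log(1/q_A)\le \log(1/q'_S)+O(\log(1/\mu))+O(1)$ fails badly. Your depth bound $d\le O(1)+\log(1/q_{A(j)})/\log(1/\mu)$ is correct node-by-node, but you cannot uniformly replace $q_{A(j)}$ by $q'_S$.

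What actually makes the lemma true is that although deep stopping nodes exist, they are reached with small probability: the $q'$-mass still on the anchored path after $i$ steps below the branching node $R$ is at most $q_{V_i}\le \mu^i q_{V_0}$, so the expected excess depth is controlled by a geometric-type tail. The paper captures this with a bottom-up induction on the invariant $\textup{vis}(V)\le c_1 + c\log(q_V/q'_V)$ for each visited node $V$; the key quantity is the ratio $q_V/q'_V$, not $q'_S$ alone. The $\mu$-reducing hypothesis enters as $q_W\le \mu q_V$ for a non-leaf child $W$, and a short calculation propagates the invariant upward (with one additive $+c$ at the unique node $R$ where the search path can branch into two anchored children). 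Your approach could in principle be salvaged by summing $\sum_{i\ge 0}\Pr_{\cF_S}[\text{depth}>d_R+i]$ and bounding each term by $\min(1,\mu^i q_{V_0}/q'_S)$, but the $O(1)$-stopping-nodes shortcut you rely on is simply not available.
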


We list theorems about data structures that are built
in the learning phase. 
Similar structures were first constructed in \cite{AilonCCLMS11}, and the
following can be proved using their ideas. 
The data structures involve construction of slab structures and 
specialized search trees for each distribution $\cD_i$.
It is also important that these trees can be represented
in small space, to satisfy the requirements of \Thm{main}.
The following lemmas give us the details of the data structures required.
Because this is not a major contribution of this paper,
we relegate the details to \S\ref{sec:learning}.

\begin{lemma} \label{lem:slabstruct}
We can construct a slab structure $\bS$ with $O(n)$ leaf slabs 
such that, with probability $1-n^{-3}$ over the construction of 
$\bS$, the following holds.  For a leaf slab $\lambda$ of $\bS$, 
let $X_\lambda$ denote the number of points in a random 
input $P$ that fall into $\lambda$.  For every leaf slab 
$\lambda$ of $\bS$, we have $\EX[X^2_\lambda] = O(1)$.  
The construction takes $O(\log n)$ rounds and $O(n\log^2 n)$ time.
\end{lemma}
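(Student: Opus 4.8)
The plan is to estimate, from $m=\Theta(\log n)$ sample inputs drawn during the learning phase, the \emph{mass function} $F(x)=\sum_{i=1}^{n}\Pr[x(p_i)\le x]$ of the $x$-coordinates, and to place the slab boundaries at roughly equispaced quantiles of $F$. Here $F$ is continuous (recall the $\cD_i$ are continuous) and increases from $0$ to $n$. Concretely: in each of $m$ rounds the algorithm reads one input and records the $x$-coordinates of its $n$ points; it then pools the $mn$ recorded values, sorts them, and inserts a vertical line after every $m$-th value in sorted order. This gives a slab structure $\bS$ with exactly $n$ leaf slabs, each containing exactly $m$ of the pooled values. The cost is dominated by sorting $mn=O(n\log n)$ numbers, i.e.\ $O(n\log^2 n)$ time over $O(\log n)$ rounds, and $m=\Theta(\log n)=O(n^{\eps})$ fits the learning budget of \Thm{main}.

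For correctness I would first reduce the second moment to a first moment. Since $X_\lambda=\sum_i\mathbf{1}[x(p_i)\in\lambda]$ is a sum of independent indicators, writing $\mu_\lambda=\EX[X_\lambda]=\sum_i\Pr[x(p_i)\in\lambda]$ (which is exactly the $F$-mass of the leaf slab $\lambda$) gives $\EX[X_\lambda^2]\le\mu_\lambda+\mu_\lambda^2$. So it suffices to show that with probability $1-n^{-3}$ over the sampling, every leaf slab of $\bS$ has $F$-mass at most some absolute constant $C$; then $\mu_\lambda\le C$ and $\EX[X_\lambda^2]\le C+C^2=O(1)$ for all leaf slabs, which is the conclusion of the lemma.

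To bound the slab masses I would use a union bound over a fixed net of $F$-quantiles; a net is needed because the slab boundaries are themselves random functions of the samples, so a tail bound cannot be applied directly to ``the mass of slab $\lambda$''. Fix a grid $t_0<t_1<\cdots<t_K$ with $F(t_j)=jC/2$ and $K=O(n/C)=O(n)$. If some leaf slab had $F$-mass exceeding $C$, it would strictly contain some grid interval $(t_{j-1},t_j)$, of $F$-mass exactly $C/2$, and then that grid interval would contain at most $m$ pooled values (at most as many as the enclosing slab). But the $mn$ pooled values are mutually independent, and the expected number of them inside a \emph{fixed} interval of $F$-mass $C/2$ is exactly $mC/2$; so for $C$ a large enough constant a Chernoff lower-tail bound makes the event ``this grid interval holds $\le m$ values'' have probability $e^{-\Omega(mC)}$. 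Taking $m=\Theta(\log n)$ large enough and union-bounding over the $O(n)$ grid intervals drives the total failure probability below $n^{-3}$, which completes the argument.

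The main obstacle will be this net argument: one has to push a single union bound over the $\Theta(n)$ random slabs through while spending only $\Theta(\log n)$ sample inputs, which pins down the right trade-off between the grid spacing $C/2$ and the per-interval concentration. Everything else --- the empirical estimate of $F$, the inequality $\EX[X^2]\le\EX[X]+\EX[X]^2$, and the time/round accounting --- is routine and follows the template for learning-phase data structures in \cite{AilonCCLMS11}.
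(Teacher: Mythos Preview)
Your construction is identical to the paper's: pool $m=\Theta(\log n)$ sample inputs, sort the $mn$ $x$-coordinates, and cut at every $m$-th value. Your reduction of the second-moment bound to $\mu_\lambda=O(1)$ via $\EX[X_\lambda^2]\le\mu_\lambda+\mu_\lambda^2$ is also the paper's reduction (the paper packages it as a separate claim about sums of independent bounded variables).

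The one genuine difference is how you cope with the slab boundaries being random. The paper fixes any pair of pooled sample values $x_i,x_j$, observes that the remaining $mn-2$ samples are independent of them, and applies Chernoff to the count inside $[x_i,x_j)$; a union bound over all $O((mn)^2)$ such pairs then covers whichever pair actually became the boundaries of a given leaf slab. You instead lay down a deterministic net of $F$-quantiles spaced $C/2$ apart and argue that any slab with $F$-mass exceeding $C$ must enclose a net cell that received at most $m$ samples; Chernoff plus a union bound over the $O(n)$ net cells finishes. Both are standard devices; your net argument is a little cleaner because it avoids the conditional-expectation bookkeeping, while the paper's pair-conditioning argument stays closer to the actual slab boundaries and needs no auxiliary grid. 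Either way the arithmetic and the $O(\log n)$-round, $O(n\log^2 n)$-time accounting are the same.
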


\begin{lemma} \label{lem:tree} Let $\eps > 0$ be a fixed parameter.
In $O(n^{\eps})$ rounds and $O(n^{1+\eps})$ time, we can construct search trees
$T_1$, $T_2$, $\ldots$, $T_n$ over $\bS$ such that the following holds.
\textup(i\textup) the trees can be represented in $O(n^{1+\eps})$ total space;
\textup(ii\textup) with probability $1-n^{-3}$ over the construction of the
$T_i$s, every $T_i$ is $O(1/\eps)$-optimal for restricted searches over $\cD_i$.
%
\end{lemma}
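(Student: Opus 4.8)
The plan is to build, for each $i$, a two-layer search tree $T_i$ over $\bS$: a small ``custom'' top layer adapted to $\cD_i$, sitting on top of cheap ``canonical'' subtrees that carry no per-distribution information, so that $T_i$ still locates a point down to its leaf slab of $\bS$ as required. The guiding observation is \Lem{search-time}: a search tree that is $\mu$-reducing for $\cD_i$ with a \emph{constant} $\mu$ is already $O(1)$-optimal for restricted searches, so were space no concern one such tree, built greedily, would do. The difficulty is the space bound of \Thm{main}: a tree that pins every point to its leaf slab has $\Theta(n)$ nodes, i.e.\ $\Theta(n^2)$ over all $i$, whereas only $O(n^{1+\eps})$ is allowed. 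We therefore build $T_i$ carefully only down to probability scale $n^{-\eps}$ and revert to a fixed balanced search tree below that; this cutoff is exactly what degrades $O(1)$-optimality to the claimed $O(1/\eps)$-optimality.

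\textbf{Learning and construction.} We process $O(n^{\eps})$ sample inputs in the learning phase (absorbing a factor $\log n$ into the exponent by the usual reparametrization of $\eps$), recording for every $i$ and every leaf slab $\lambda$ the empirical frequency $\widetilde q(i,\lambda)$ of the event $p_i\in\lambda$. A Chernoff bound together with a union bound over the polynomially many intervals of slabs ($\bS$ has $O(n)$ leaf slabs by \Lem{slabstruct}) gives, with probability $1-n^{-3}$, simultaneous accuracy of all these estimates: a factor-$2$ multiplicative error for intervals of true $\cD_i$-mass at least $n^{-\eps}$, and an additive error $O(n^{-\eps})$ otherwise. Condition on this event. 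To build $T_i$ we recursively split the range of leaf slabs using the $\widetilde q(i,\cdot)$: at a node whose range has estimated mass exceeding $\tfrac12 n^{-\eps}$ we either peel off, as a leaf child, a single slab carrying more than half of that estimated mass (making the node ternary in the sense of \Def{tree}), or, if no such slab exists, choose a split point so that both parts carry at most a constant fraction of the estimated mass; a range with estimated mass at most $\tfrac12 n^{-\eps}$ becomes a leaf and is replaced by the canonical balanced search tree on its slabs, which is determined by its index range and costs $O(1)$ to store. Every internal node then has estimated mass exceeding $\tfrac12 n^{-\eps}$, and estimated mass drops by a constant factor along each edge between internal nodes, so the custom part has depth $O(\eps\log n)$ and hence $n^{O(\eps)}$ nodes; over all $i$ this is $n^{1+O(\eps)}$, which is (i) after reparametrizing $\eps$. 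The running time is dominated by bucketing the sample points into slabs and by the per-tree splitting, both $n^{1+O(\eps)}$.

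\textbf{Restricted-search optimality.} Fix $i$, an interval $S$ of leaf slabs, an $S$-restricted distribution with $q'_S$ as in \Def{rest}, and an $S$-restricted search. Under the conditioning above, the custom part of $T_i$ is $\mu$-reducing for the true $\cD_i$ with a constant $\mu$, every ``stopped'' block (custom-part leaf) consisting of at least two slabs has true $\cD_i$-mass $O(n^{-\eps})$, and each custom-part node's range is a union of whole blocks. Consequently the search descends into a canonical subtree only when $j$ lies in one of the at most two blocks straddling an endpoint of $S$, and otherwise halts inside the custom part. The expected cost of the ``halts in the custom part'' case is, by \Lem{search-time} applied to the custom part with $S'$ the interval of whole blocks contained in $S$, at most a constant times $-\log q'_{S'}+1$; since $q'_{S'}\le q'_S$ and $t\mapsto t\log(1/t)$ is bounded on $(0,1]$, this is $O(-\log q'_S+1)$. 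The ``descends into a canonical subtree'' case costs an extra $O(\log n)$ and happens with restricted probability at most $O(n^{-\eps})/q'_S$, contributing $O(\log n\cdot n^{-\eps}/q'_S)$; a short case split on the size of $q'_S$ shows this too is $O(\eps^{-1}(-\log q'_S+1))$ --- when $q'_S=O(n^{-\eps})$ the budget already contains a term $\Omega(\eps\log n)$, and when $q'_S$ is larger the factor $n^{-\eps}/q'_S$ is at most $1$ and decays polynomially as $q'_S$ grows. Adding the two cases shows that $T_i$ is $O(1/\eps)$-optimal for restricted searches over $\cD_i$, which is (ii). The data-structure machinery here follows \cite{AilonCCLMS11}; the new and trickiest part --- the main obstacle --- is this last estimate, matching the cheap bounded-depth ``canonical'' layer against the restricted-search budget of \Def{tree}, which is precisely where the factor $1/\eps$ enters.
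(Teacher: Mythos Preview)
Your construction is essentially the paper's: sample $O(n^\eps)$ inputs, use Chernoff bounds to estimate interval probabilities, greedily build a $\mu$-reducing partial tree of depth $O(\eps\log n)$ down to probability scale $n^{-\eps}$, and extend each unfinished leaf with a balanced binary search tree (the paper's ``partial tree'' and your ``custom top layer'' are the same object, and both complete it with a shared canonical structure to meet the $O(n^{1+\eps})$ space bound).

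Where you differ is the proof of $O(1/\eps)$-optimality. The paper introduces a thought-experiment tree $T_i'$ that replaces the balanced subtrees by truly $\mu$-reducing subtrees (built from the actual $q(i,\cdot)$), so that $T_i'$ is $O(1)$-optimal by \Lem{search-time}; it then compares the $S$-restricted search in $T_i$ to that in $T_i'$ by classifying searches into three types according to whether and at what depth they leave the partial tree, and handles the bad type by a counting argument on shallow partial-tree leaves. Your argument is more direct: you observe that an $S$-restricted search enters the canonical layer only through one of the (at most two) stopped blocks straddling $\partial S$, bound that contribution by $O(\log n)\cdot\min(1, n^{-\eps}/q'_S)$, and bound the remaining ``stays in the custom part'' contribution by applying \Lem{search-time} to the custom part over the sub-interval $S'$ of whole blocks inside $S$, together with the $t\log(1/t)=O(1)$ trick to pass from $q'_{S'}$ back to $q'_S$. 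Both routes arrive at the same $O(\eps^{-1}(-\log q'_S+1))$ bound; yours avoids the auxiliary tree $T_i'$ and the three-type case analysis, at the cost of the small extra step of relating $q'_{S'}$ to $q'_S$. One point worth making explicit in your write-up: \Lem{search-time} is being applied to the custom part viewed as a $\mu$-reducing tree over the coarsened ground set of blocks (this is legitimate since the $\mu$-reducing condition concerns only non-leaf children), and the $S$-restricted search time is dominated by the $S'$-restricted one because $S'\subseteq S$.
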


%
\section{Outline} \label{sec:outline}

We start by providing a very informal overview of the algorithm.
Then, we shall explain how the optimality is shown.

If the points of $P$ are sorted by $x$-coordinate, the
maxima of $P$ can be found easily by a
right-to-left sweep over $P$:
we maintain the largest $y$-coordinate $Y$ of the points
traversed so far; when a point $p$ is visited in the traversal,
if $y(p)<Y$, then $p$ is non-maximal, and the point $p_j$ with
$Y=y(p_j)$ gives a per-point certificate for $p$'s non-maximality.
If $y(p)\ge Y$, then $p$ is maximal, and can be put at the beginning
of the certificate list of maxima of $P$.

This suggests the following approach to a self-improving algorithm for maxima:
sort $P$ with a self-improving sorter and then use the traversal.
The self-improving sorter of \cite{AilonCCLMS11}
works by locating each point of $P$ within the slab structure
$\bS$ of \Lem{slabstruct} using the trees $T_i$ of
\Lem{tree}.


While this approach does use $\bS$ and the $T_i$'s,
it is not optimal for maxima,
because the time spent finding the exact sorted order
of non-maximal points may be wasted: %
in some sense, we are learning much more information about the input $P$ than necessary. To
deduce the list of maxima, we do not need 
the sorted order of
\emph{all} points of $P$: it suffices to know the sorted order of just
the maxima!
An optimal algorithm would probably
locate the maximal points in $\bS$ and would not bother locating
``extremely non-maximal'' points. This is, in some sense, the difficulty 
that output-sensitive algorithms face. 

As a thought experiment, let us suppose that the maximal
points of $P$ are known to us, but not in sorted order. We search only for these 
in $\bS$ and determine the sorted
list of maximal points. 
We can argue that the optimal algorithm must also
(in essence) perform such a search. We also need to find per-point
certificates for the non-maximal points.
We use the slab structure $\bS$ and the search trees, but now we shall
be very conservative in our searches.
Consider the search for a point $p_i$. At any intermediate
stage of the search, $p_i$ is placed in a slab $S$.
This rough knowledge of $p_i$'s location 
may already suffice to certify 
its non-maximality: 
let $m$ denote the leftmost maximal
point to the right of $S$ (since the sorted list of maxima is known, this information
can be easily deduced). We check if $m$ dominates $p_i$. If so, we have a per-point certificate
for $p_i$ and we promptly terminate the search for $p_i$. Otherwise,
we continue the search by a single step and repeat. 
We expect that many searches will not proceed too long, achieving
a better position to compete with the optimal algorithm.

Non-maximal points that are dominated by many maximal points will usually
have a very short search. Points
that are ``nearly'' maximal will require a much longer search. 
So this approach
should derive just the ``right" amount of information to 
determine the maxima output.
But wait!  Didn't we assume that the maximal points were known? 
Wasn't this crucial
in cutting down the search time? This is too much of an assumption, and 
because the maxima are highly dependent on each other, it is not clear
how to determine which points are maximal before performing searches.

The final algorithm overcomes this difficulty by interleaving
the searches for sorting the points with confirmation
of the maximality of some points, in a rough right-to-left order
that is a more elaborate version of the traversal scheme given
above for sorted points. 
The searches for all points $p_i$ (in their respective
trees $T_i$) are performed ``together", and their order is carefully chosen. At any intermediate
stage, each point $p_i$ is located in some slab $S_i$, represented by
some node of its search tree. We choose a specific point and advance
its search by one step. This order is very important, and is the basis
of our optimality. The algorithm is described in detail and analyzed in \S\ref{sec:algorithm}.

\textbf{Arguing about optimality.} A major challenge of self-improving
algorithms is the strong requirement of optimality for the distribution $\cD$.
We focus on the model of linear comparison trees, and let $\cT$ be an optimal
tree for distribution $\cD$. (There may be distributions where such an exact $\cT$
does not exist, but we can always find one that is near optimal.) One of 
our key insights is that when $\cD$ is a product distribution, then we
can convert $\cT$ to $\cT'$, a restricted comparison tree whose expected
depth is only a constant factor worse. In other words, there exists a near optimal
restricted comparison tree that computes the maxima. 

In such a tree, a leaf is labeled with a sequence of regions
$\cR = (R_1, R_2, \ldots, R_n)$.
Any input $P = (p_1, p_2, \ldots, p_n)$ such that $p_i \in R_i$ for all $i$,
will lead to this leaf. Since the distributions are independent, we
can argue that the probability that an input leads to this leaf
is $\prod_i \Pr_{p_i \sim \cD_i}[p_i \in R_i]$. Furthermore,
the depth of this leaf can be shown to be $-\sum_i \log \Pr[p_i \in R_i]$.
This gives us a concrete bound that we can exploit.

It now remains to show that if we start with a random input from $\cR$,
the expected running time is bounded by the sum given above. We will
argue that for such an input, as soon as the search for $p_i$ locates
it inside $R_i$, the search will terminate. This leads to the optimal
running time.

\section{The Computational Model and Lower Bounds}\label{sec:model}

\subsection{Reducing to restricted comparison trees} \label{sec:restrict}

We prove that when $P$ is generated
probabilistically, it suffices to focus on restricted comparison 
trees. To show this, we provide a sequence of transformations, starting
from the more general comparison tree,
that results in a restricted linear comparison tree of comparable
expected depth.
The main lemma of this section
is the following. 

\begin{lemma}\label{lem:restrictedTree}
Let $\cT$ a finite linear comparison tree and $\cD$ be a product
distribution over points. Then there exists a restricted 
comparison tree $\cT'$ with expected depth $d_{\cD}(\cT') = O(d_\cD(\cT))$, as 
$d_\cD(\cT) \rightarrow\infty$.
\end{lemma}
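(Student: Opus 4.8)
**The plan is to convert a general linear comparison tree to a restricted one through a sequence of local transformations, each replacing input-dependent queries with input-independent ones at the cost of a constant-factor blowup in depth.** The four query types in Definition~\ref{def:opt} form a natural hierarchy: type (iv) (a line through two input points) is the most general, type (iii) (a line through one input point and a fixed point), type (ii) (a fixed-slope line through one input point), and type (i) (a fully fixed line) the most restricted. I would peel these off from the top, showing how to simulate each query of a less-restricted type by a bounded-depth subtree using only more-restricted queries, \emph{when the input distribution is a product distribution}.

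\textbf{Step 1: Eliminating type-(iv) queries.} A query ``$p_i \in \ell^+$?'' where $\ell$ passes through input points $p_j, p_k$ is the standard CCW-orientation test on the triple $(p_j, p_k, p_i)$. The key observation is that once we know, from earlier comparisons along the root-to-node path, a bounding box (or more generally a polygonal region $R_j$, $R_k$) for each of $p_j$ and $p_k$, we can approximately localize $p_j$ and $p_k$ within their regions using type-(i) queries (axis-parallel or fixed cuts), and then the orientation test against $p_i$ reduces to a constant number of type-(iii)/type-(ii) tests. More carefully: I would argue that one can afford to first ``resolve'' the two pivot points $p_j,p_k$ to finer sub-regions, and the amortized cost of doing so is controlled because the probability mass shrinks geometrically — this is exactly where the product structure and an entropy/depth accounting argument (in the spirit of Lemma~\ref{lem:search-time}) enter.

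\textbf{Step 2: Eliminating type-(iii) and type-(ii) queries.} These are handled similarly but more easily: a query through a single input point $p_j$ (with either a fixed auxiliary point or a fixed slope) becomes, once $p_j$ is localized to a small region $R_j$, equivalent to a fixed query on $p_i$ up to a correction that depends only on which sub-cell of $R_j$ contains $p_j$. Again we localize $p_j$ with type-(i) queries first. Throughout, the bookkeeping device is: a node $v$ of the restricted tree corresponds to a product region $\prod_i R_i$, its depth is comparable to $-\sum_i \log \Pr[p_i \in R_i]$, and we must show the total depth blowup is $O(1)$ in expectation. The cleanest way is to fix a ``granularity'' for each coordinate (say, localize each $p_i$ just finely enough that the relevant query outcomes are determined), charge the localization cost to the reduction in log-probability, and observe that each original comparison forces only a bounded amount of additional localization.

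\textbf{The main obstacle} is controlling the \emph{expected} depth rather than the worst-case depth: a naive simulation that fully resolves every pivot point would multiply the depth by a non-constant factor (potentially $\Theta(\log n)$ or worse), since a single input point can serve as a pivot in many nodes. The resolution must exploit that we only need to localize a pivot point to the accuracy demanded by the queries actually on the current path, and that the product distribution lets us treat the localization of each coordinate as an independent search whose cost is governed by the probability mass of the target region — so the total extra work telescopes. Making this charging argument rigorous, and in particular handling the interaction between localization performed for one query and localization needed for a later query involving the same point, is the delicate part; I would structure it as a potential-function argument where the potential at node $v$ is $\sum_i(-\log\Pr[p_i\in R_i])$ and show each original comparison increases the potential by $O(1)$ in expectation while the restricted tree's depth is bounded by the final potential.
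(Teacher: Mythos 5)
Your high-level architecture matches the paper's: peel off query types from most to least general, exploit the product structure, and aim for a constant-factor blowup. But there is a genuine gap at the heart of the argument, and it is exactly the place you flag as ``the delicate part.''

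The gap is in the termination criterion for simulating a single comparison. Your plan is to ``localize $p_j$ (and $p_k$) just finely enough that the relevant query outcomes are determined,'' and then to charge that localization to the drop in $-\log\Pr[p_j\in R_j]$. The trouble is that there is no bounded notion of ``finely enough'': for a type-(iv) query, $p_i$ can lie arbitrarily close to the line $\overline{p_j p_k}$, so fully determining the outcome by refining $R_j$ and $R_k$ may cost an unbounded number of bits of localization. Consequently your key claim --- that each original comparison increases the potential $\sum_i(-\log\Pr[p_i\in R_i])$ by $O(1)$ in expectation --- does not follow from the construction you sketch; the per-comparison potential increase has no a priori bound. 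The paper's Claim~\ref{clm:node} sidesteps this entirely: it never tries to localize a pivot ``enough to resolve the query.'' Instead, for each type it builds a fixed partition (a halving line for type (ii); a constant-size family of cones with apex $q$ for type (iii); a constant-size cutting for type (iv)) with the property that after one level of localization the original comparison is \emph{already resolved with constant probability}. So the number of simulation steps is dominated by a geometric random variable, giving $O(1)$ expected subtree depth per original node, regardless of how much localization actually occurred. This is then turned into $d_{\cT'}=O(d_\cT)$ not by a global potential argument but by the simple per-node accounting of Claim~\ref{clm:YvBound}: write $d_\cT=\sum_v\EX[X_v]$ with $X_v$ the indicator that $v$ is reached, $d_{\cT'}=\sum_v\EX[Y_v]$ with $Y_v$ the number of simulated steps for $v$, and show $\EX[Y_v]\le c\,\EX[X_v]$ node by node.

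Two secondary issues. First, you also lean on ``the orientation test against $p_i$ reduces to a constant number of type-(iii)/(ii) tests'' once $p_j,p_k$ are localized; the paper's reduction for type (iv) instead localizes $p_i$ in a cutting and checks whether $\overline{p_jp_k}$ crosses the cell containing $p_i$ --- this reversal is what makes the constant-probability-per-level argument go through. Second, the potential/entropy identification you invoke (depth $= -\sum_i\log\Pr[p_i\in R_i]$) only holds for \emph{entropy-sensitive} restricted trees; the paper proves this separately (Lemma~\ref{lem:entropyDepth} and Lemma~\ref{lem:entropyTree}) \emph{after} Lemma~\ref{lem:restrictedTree}, so it cannot be used as a black box inside this proof without circularity. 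To repair your proposal, replace ``localize to sufficient accuracy'' with the constant-probability-of-resolution gadgets (halving lines, cones, cuttings) and replace the potential argument with the per-node expectation bound.
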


We will 
describe a transformation from  $\cT$ into a restricted comparison tree
with similar depth. The first step is to show how to represent a single 
comparison by a restricted linear comparison tree, provided that $P$ is
drawn from a product distribution. The final transformation basically
replaces each node of $\cT$ by the subtree given by the next claim.
For convenience, we will drop the subscript of $\cD$ from $d_\cD$, since
we only focus on a fixed distribution.

\begin{claim} \label{clm:node} 
Consider a comparison $C$ as described
in \Def{opt}, where the comparisons are listed in increasing order of simplicity.
Let
$\cD'$ be a product distribution for $P$ such that each $p_i$
is drawn from a polygonal region $R_i$. Then either $C$ is the
simplest, type \textup(i\textup) comparison, or there exists
a restricted linear comparison tree $\cT'_C$ that resolves the comparison $C$
such that the expected depth of $\cT'_C$ (over the distribution $\cD'$)
is $O(1)$, and all comparisons used in $\cT'_C$ are simpler than $C$.
\end{claim}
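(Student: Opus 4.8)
The plan is to handle each comparison type (ii)--(iv) by replacing it with a short restricted tree whose queries are all of type (i), exploiting the fact that under $\cD'$ each coordinate $p_i$ lives in a known polygonal region $R_i$. The key observation is that a comparison ``$p_i \in \ell^+$?'' where $\ell$ depends on one or two \emph{other} input points $p_j$ (and possibly $p_k$) can be resolved by a case analysis on \emph{where those other points lie}: once we know enough about $p_j$ (and $p_k$) to pin down the line $\ell$ within a region where its position relative to $p_i$ is determined, a single type-(i) query finishes the job. Concretely, for a type-(ii) comparison the line through $p_j$ has a fixed slope, so the halfplane $\ell^+$ is determined by which side of a fixed-slope line through $p_j$ the point $p_i$ lies on; I will first do an (approximately) halving type-(i) query on $p_j$ inside $R_j$ with a line of that fixed slope, recursing into the two sub-pieces of $R_j$. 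After $O(1)$ expected steps of this halving recursion, the sub-region of $R_j$ is ``thin'' enough (relative to $R_i$) that the fixed-slope line through any point of it separates $R_i$ cleanly, or $R_i$ is split by one further type-(i) query on $p_i$. The analogous strategy works for types (iii) and (iv), now halving on the one or two points that define $\ell$.

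The main device making the expected depth $O(1)$ is a \emph{halving-line} argument combined with the restricted-search machinery already set up: I will build, implicitly, a $\mu$-reducing search structure on $R_j$ (and $R_k$) using halving lines of the relevant fixed slope with respect to $\cD_j$ (resp.\ $\cD_k$), and invoke \Lem{search-time} to conclude that the search terminates after $O(1)$ expected comparisons \emph{once the outcome of $C$ becomes determined}. The crucial point is that the event ``the outcome of $C$ is not yet determined'' corresponds to $p_j$ (and $p_k$) lying in a shrinking nested sequence of sub-slabs whose $\cD_j$-probabilities decay geometrically by the halving property; hence the probability that the search has not terminated after $t$ steps is $\le 2^{-\Omega(t)}$, giving expected depth $\sum_t 2^{-\Omega(t)} = O(1)$. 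For type-(iv) comparisons, where $\ell$ is defined by two input points $p_j,p_k$, I will first locate $p_j$ coarsely, then $p_k$, interleaving as needed; the line $\ell$ is determined up to the product of the two current sub-regions, and standard planar geometry says that if both sub-regions are small enough relative to $R_i$ then $\ell$ either misses $R_i$ or crosses it in a controlled way, reducing to at most one type-(i) query on $p_i$ plus a constant number of further type-(ii)/(iii)-style queries (which themselves recurse, but only a constant expected number of times by the same geometric-decay bound).

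I would carry out the steps in this order: (1) dispose of type (i) trivially (nothing to do). (2) Type (ii): set up the fixed-slope halving search on $R_j$, show the two children have $\cD_j$-mass at most half the parent, invoke \Lem{search-time} (or re-derive the $O(1)$ bound directly via geometric decay), and argue that once $R_j$'s current piece is separated from $R_i$ by the fixed-slope line a single type-(i) query on $p_i$ resolves $C$; conclude expected depth $O(1)$ with all queries simpler than $C$. (3) Type (iii): same as (ii) but the ``fixed slope'' is replaced by ``fixed point $q$'', so the halving search on $R_j$ uses lines through $q$; the geometry that guarantees separation of $R_i$ is an angular sweep around $q$. (4) Type (iv): interleave coarse searches for $p_j$ and $p_k$, reduce to a bounded number of type-(ii)/(iii) comparisons on a product of small regions, and recurse. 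Throughout, ``simpler than $C$'' is maintained because every query I introduce is of type (i) on $p_i$ or of type (ii)/(iii) on the \emph{defining} points $p_j,p_k$ — and in cases (ii),(iii) all introduced queries are of type (i), strictly simpler, so the recursion in (iv) bottoms out.

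The step I expect to be the main obstacle is the geometric separation argument in type (iv): unlike types (ii) and (iii), where the direction or pivot of $\ell$ is fixed and halving is one-dimensional, here $\ell$ genuinely has two degrees of freedom, and it is not immediately true that shrinking $R_j$ and $R_k$ to small $\cD$-mass makes $\ell$'s relation to $R_i$ determined — one must rule out the ``pivoting'' configuration where $\ell$ sweeps across $R_i$ even as $p_j,p_k$ are each confined to tiny regions. Handling this will require either (a) further subdividing based on additional type-(ii)/(iii) comparisons that bound the \emph{slope} of $\ell$, then a fixed-slope-style argument, or (b) a more careful charging showing that the ``bad'' product regions of $(R_j,R_k)$ have total probability decaying geometrically in the number of comparisons spent. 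I believe (a) is cleaner: bound the slope of $\ell$ first (a type-(ii)-flavored task on $p_j$ relative to a moving but coarsely-known $p_k$, and symmetrically), then the position, reducing type (iv) to $O(1)$ nested invocations of the already-handled cases.
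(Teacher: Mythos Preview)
Your high-level plan for types (ii) and (iii) --- halve the region of the point that \emph{defines} the line --- matches the paper's, but your description has a real gap. You only query $p_j$ during the halving recursion and defer the query on $p_i$ to ``one further type-(i) query'' at the end. This does not work: halving $R_j$ shrinks the $\cD_j$-mass of the current strip, but gives you no control over whether $p_i$ lies inside that strip. Concretely, if $R_i=R_j$ and $\cD_i=\cD_j$ are both uniform on the unit square, the strip after $t$ halvings \emph{always} intersects $R_i$, so your stopping rule ``the strip separates $R_i$ cleanly'' is never met, and a single final query on $p_i$ cannot resolve $C$ when $p_i$ lands in the strip. Your sentence ``the probability that the search has not terminated after $t$ steps is $\le 2^{-\Omega(t)}$'' conflates the $\cD_j$-mass of the strip with the probability that $p_i$ lies there; these are unrelated. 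The fix the paper uses is to query \emph{both} $p_i$ and $p_j$ against each halving line: if they land on opposite sides the comparison is resolved, which happens with probability at least $1/4$ per round by independence. For type (iii) the paper does the analogous thing with cones apexed at $q$: locate \emph{both} $p_i$ and $p_j$ in a cone decomposition; success when they land in different cones. (Also, your invocation of \Lem{search-time} here is a red herring --- you just need the direct geometric-decay bound.)

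For type (iv) your instinct that pivoting is the obstacle is exactly right, and your proposed fix (pin down the slope first via type-(ii)-style queries) is where the approaches diverge. The paper does something quite different and cleaner: rather than search for $p_j$ and $p_k$ to pin down $\ell$, it searches for $p_i$. Sample $O(1)$ lines from the distribution of $\overline{p_jp_k}$, triangulate the resulting arrangement, and you get a constant-size family of cells such that the random line $\overline{p_jp_k}$ misses any fixed cell with probability at least $1/2$ (this is the standard cutting/$\varepsilon$-net trick). Locate $p_i$ in a cell $V$ via $O(1)$ type-(i) queries; then test whether $\overline{p_jp_k}$ meets $V$ --- this reduces to a constant number of type-(iii) queries (compare $p_k$ against the lines through $p_j$ and each vertex of $V$). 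With probability $\ge 1/2$ the line misses $V$ and $C$ is resolved; otherwise recurse inside $V$. This sidesteps the pivoting issue entirely because you never try to pin down $\ell$ --- you only need to certify that $\ell$ avoids the cell containing $p_i$. Your route (a) may be salvageable, but making ``bound the slope first'' precise when neither endpoint is fixed is exactly the pivoting problem restated.
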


\begin{proof}
\noindent\textbf{$v$ is of type (ii).} 
  This means that $v$ needs to determine whether an input point $p_i$ lies 
  to the left of the directed line $\ell$ through another input point $p_j$ 
  with a fixed slope $a$. We replace
   this comparison with a binary search. Let $R_j$ be the region in 
   $\cD'$ corresponding to $p_j$.  Take a halving line $\ell_1$ 
   for $R_j$
   with slope~$a$. Then perform two comparisons to determine on which side of
   $\ell_1$ the inputs $p_i$ and $p_j$ lie. If $p_i$ and $p_j$ lie on different
   sides of $\ell_1$, we declare success and resolve the original comparison
   accordingly. Otherwise, we replace $R_j$ with the appropriate new region and
   repeat the process until we can declare success.
   Note that in each attempt the success probability is at least $1/4$.
   The resulting restricted tree $\cT'_C$ can be infinite. 
   Nonetheless, the probability that an evaluation of $\cT'_C$
   leads to a node of depth $k$ is at most $2^{-\Omega(k)}$, so the
   expected depth is $O(1)$. 
   
\vspace{5pt}

\noindent\textbf{$v$ is of type (iii).} 
  Here the node $v$ needs to determine whether an input point $p_i$ lies 
  to the left of the directed line $\ell$ through another input point $p_j$ 
  and a fixed point $q$.
  
  We partition the plane by a constant-sized family of cones,
  each with apex $q$, such that for each cone $V$ in the family,
  the probability that line $\overline{q p_j}$ meets $V$ (other than at $q$)
  is at most $1/2$. Such a family
  could be constructed by a sweeping a line around $q$, or by
  taking a sufficiently large, but constant-sized, sample from the distribution
  of $p_j$, and bounding the cones by all lines through $q$ and each
  point of the sample. Such a construction has a non-zero probability
  of success, and therefore the described family of cones exists.
  
  We build a restricted tree that locates a point in the corresponding cone.
  For each cone $V$, we can recursively build such a family of cones (inside $V$),
  and build a tree for this structure as well. Repeating for each cone, this
  leads to an infinite restricted tree $\cT'_C$. We search for both $p_i$ and $p_j$
  in $\cT'_C$. When we locate $p_i$ and $p_j$ in two different cones of the same family,
  then comparison between $p_i$ and $\overline{q p_j}$ is resolved and the search
  terminates. The probability
  that they lie in the same cones of a given family is at most $1/2$, so
  the probability that the evaluation leads to $k$ steps is at most $2^{-\Omega(k)}$.
 
 
\vspace{5pt} 

\noindent\textbf{$v$ is of type (iv).} 
  Here the node $v$ needs to determine whether an input point $p_i$ lies 
  to the left of the directed line $\ell$ through input points $p_j$ and $p_k$.
 
 We partition the plane by a constant-sized family of triangles and cones, such that for each
 region $V$ in the family, the probability that the line through $p_j$ and
 $p_k$ meets $V$ is at most $1/2$.  Such a family could be constructed
 by taking a sufficiently large random sample of pairs $p_j$ and $p_k$
 and triangulating the arrangement of the lines through each pair. Such
 a construction has a non-zero probability of success, and therefore
 such a family exists. (Other than the source of the random lines used in the
 construction, this scheme goes back at least to \cite{Cla87}; a tighter
 version, called a \emph{cutting}, could also be used \cite{ChazCuttings}.) 
 
 When computing $C$, suppose $p_i$ is in region $V$ of the family.
 If the line $\overline{p_jp_k}$ does not meet $V$, then the comparison
 outcome is known immediately. This occurs with probability at least $1/2$.
 Moreover, determining the region containing $p_i$ can be done
 with a constant number of comparisons of type (i), and determining
 if $\overline{p_jp_k}$ meets $V$ can be done with a constant number of
 comparisons of type (iii); for the latter, suppose $V$ is a triangle.
 If $p_j\in V$, then $\overline{p_jp_k}$ meets $V$. Otherwise,
 suppose $p_k$ is above all the lines through $p_j$ and each
 vertex of $V$; then $\overline{p_jp_k}$ does not meet $V$. Also, if $p_k$
 is below all the lines through $p_j$ and each vertex, then $\overline{p_j p_k}$
 does not meet $V$. Otherwise, $\overline{p_j p_k}$ meets $V$.
 So a constant number of type (i) and type (iii) queries suffice.
 
 By recursively building a tree for each
 region $V$ of the family, comparisons of type (iv) can be done
 via a tree whose nodes use comparisons of type (i) and (iii) only.
 Since the probability of resolving the comparison is at least
 $1/2$ with each family of regions that is visited, the expected
 number of nodes visited is constant.
\end{proof}

\begin{proof}[of \Lem{restrictedTree}]
We transform $\cT$ into a tree $\cT'$ that
has no comparisons of type (iv), by using the construction
of  \Clm{node} where nodes of type (iv) are replaced by a tree.
We then transform $\cT'$
into a tree $\cT''$ that has no comparisons of type (iii)
or (iv), and finally transform $\cT'''$ into a restricted
tree. Each such transformation is done in the same general way,
using one case of \Clm{node}, so  we focus on the first one.

We incrementally transform $\cT$ into the tree $\cT'$.
In each such step, we have a partial restricted comparison tree 
$\cT''$ that will eventually become $\cT'$.
Furthermore, during the process each node of $\cT$ is in one of 
three different states. It is either \emph{finished},
\emph{fringe}, or \emph{untouched}. Finally, we have a function $S$
that assigns to  each finished and to each fringe node of $\cT$ a subset 
$S(v)$ of nodes in $\cT''$.

The initial situation is as follows: all nodes of $\cT$ are untouched 
except for the root which is fringe. Furthermore, the partial tree 
$\cT''$ consists of a single root node $r$ and the function
$S$ assigns the root of $\cT$ to the set  $\{r\}$.

Now our transformation proceeds as follows.
We pick a fringe node $v$ in $\cT$, and mark $v$ as finished.
For each child $v'$ of $v$, if $v'$ is an internal
node of $\cT$, we mark it as fringe. Otherwise, we mark $v'$ as finished.
Next, we apply \Clm{node} to each node $w \in S(v)$. Note that this
is a valid application of the claim, since $w$ is a node of $\cT''$, a 
restricted tree. Hence $\mathcal{R}_w$ is a product set, and the distribution
$\cD$ restricted to $\mathcal{R}_w$ is a product distribution.
Hence, replace each node $w \in S(v)$ in $\cT''$ by the subtree given 
by \Clm{node}.
Now $S(v)$ contains the roots of these subtrees. Each leaf of each such
subtree corresponds to an outcome of the comparison in $v$.
(Potentially, the subtrees are countably infinite, but the expected number
of steps to reach a leaf is constant.)
For each child
$v'$ of $v$, we define $S(v')$ as the set of all such leaves 
that correspond to the same outcome of the comparison as $v'$. 
We continue this process until there are no fringe nodes left. 
By construction, the resulting tree $\cT'$ is restricted. 

It remains
to argue that $d_{\cT'} = O(d_{\cT})$. 
Let $v$ be a node of $\cT$. 
We define two random variables $X_v$ and $Y_v$. The variable $X_v$ is 
the indicator random variable for the event that the node
$v$ is traversed for a random input $P \sim \cD$.  
The variable $Y_v$ denotes the number of nodes traversed in $\cT'$
that correspond to $v$ (i.e., the number of nodes
needed to simulate the comparison at $v$, if it occurs).
We have $d_{\cT} = \sum_{v \in \cT} \EX[X_v]$, because if the leaf corresponding
to an input $P \sim \cD$ has depth $d$, exactly $d$ nodes are traversed to
reach it.
We also have $d_{\cT'} = \sum_{v \in \cT} \EX[Y_v]$, since
each node in $\cT'$ corresponds to exactly one node $v$ in $\cT$. 
\Clm{YvBound} below 
shows that $\EX[Y_v] = O(\EX[X_v])$, which completes the 
proof.
\end{proof}

\begin{claim}\label{clm:YvBound} $\EX[Y_v] \leq c\EX[X_v]$
\end{claim}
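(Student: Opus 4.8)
The claim to prove is $\EX[Y_v] \le c\,\EX[X_v]$, where $X_v$ is the indicator that node $v$ of $\cT$ is traversed on a random input $P\sim\cD$, and $Y_v$ counts the number of nodes in $\cT'$ that were created to simulate the comparison at $v$ (and that actually get traversed on input $P$). The key structural fact is that $Y_v$ is nonzero only when $X_v=1$: the simulating subtrees for $v$ live inside the nodes $S(v)$, and a node of $S(v)$ is reached exactly when $v$ would have been reached in $\cT$. So I can write $\EX[Y_v] = \Pr[X_v=1]\cdot \EX[Y_v \mid X_v=1]$, and it suffices to show $\EX[Y_v \mid X_v = 1] = O(1)$.

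**The conditioning argument.** Conditioned on $X_v = 1$, the input $P$ lands in the region $\cR_v \subseteq \R^{2n}$. By construction $\cR_v$ is partitioned by the regions $\{\cR_w : w \in S(v)\}$, each of which (since $\cT''$ is restricted at that stage) is a product of polygonal regions $R_1\times\cdots\times R_n$. So conditioned on reaching a particular $w\in S(v)$, the coordinates $p_1,\dots,p_n$ are independent with $p_i$ drawn from $\cD_i$ restricted to $R_i$ — exactly the hypothesis of \Clm{node}. Applying \Clm{node} with $\cD' = \cD|_{\cR_w}$, the subtree $\cT'_C$ rooted at $w$ that resolves the comparison $C$ at $v$ has expected depth $O(1)$ over $\cD'$. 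Averaging over which $w\in S(v)$ is hit (a convex combination of the conditional distributions, which together give the conditional distribution of $P$ given $X_v=1$) yields $\EX[Y_v \mid X_v = 1] = O(1)$. Multiplying back by $\Pr[X_v = 1] = \EX[X_v]$ gives the claim with $c$ the constant from \Clm{node}. I should note the degenerate case separately: if $C$ is already a type (i) comparison, then \Clm{node} does not apply, but then $v$ is simulated by a single restricted node, so $Y_v \le X_v$ trivially.

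**What needs care.** The one subtlety is that the subtrees produced by \Clm{node} may be countably infinite, so $Y_v$ is a random variable on a countable state space rather than a bounded one; I must make sure the expectation is genuinely finite and equals the claimed $O(1)$. This is exactly what the tail bounds inside \Clm{node} give us: in each case the probability of not resolving the comparison after $k$ levels is $2^{-\Omega(k)}$, so $\EX[\text{depth}] = \sum_k \Pr[\text{depth} \ge k] = O(1)$, and linearity of expectation over the (possibly infinite) collection of simulating nodes is justified by nonnegativity (Tonelli). A second point to state cleanly: "the number of nodes of $\cT'$ corresponding to $v$ that are traversed on input $P$" equals the depth reached within the single simulating subtree rooted at the unique $w\in S(v)$ containing $P$ — the sets $\{\cR_w\}$ partition $\cR_v$, so exactly one such $w$ is relevant per input. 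I expect the main obstacle to be purely expository: making precise that the conditional distribution of $P$ given $X_v=1$, when further conditioned on the cell $\cR_w$, is the product distribution $\cD'$ of \Clm{node}, and that no bad dependence on $n$ or on the depth of $v$ sneaks in — the bound $O(1)$ must be uniform over $v$, which it is because the constant in \Clm{node} is absolute.
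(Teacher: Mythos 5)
Your proof is correct and follows essentially the same route as the paper's: decompose $\EX[Y_v]$ over the partition $\{\cR_w : w \in S(v)\}$ of $\cR_v$ (plus the $X_v = 0$ case, which contributes zero), apply \Clm{node} to bound each conditional expectation $\EX[Y_v \mid P \in \cR_w]$ by an absolute constant, and multiply back by $\Pr[P \in \cR_v] = \EX[X_v]$. The extra remarks about the type (i) degenerate case, Tonelli for the countably infinite subtrees, and uniformity of the constant over $v$ are sound and fill in details the paper leaves implicit.
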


\begin{proof}
Note that $\EX[X_v] = \Pr[X_v = 1] = \Pr[P \in \mathcal{R}_v]$.
Since the sets $\mathcal{R}_w$, $w \in S(v)$, partition $\mathcal{R}_v$, 
we can write $\EX[Y_v]$ as
\begin{multline*}
\EX[Y_v \mid X_v = 0]\Pr[X_v = 0] +\\
\sum_{w \in S(v)} \EX[Y_v \mid P \in \mathcal{R}_w]\Pr[P \in \mathcal{R}_w].
\end{multline*}
Since $Y_v = 0$ if $P \notin \mathcal{R}_v$, we have
$\EX[Y_v \mid X_v = 0] = 0$ and also
$\Pr[P \in \mathcal{R}_v] = \sum_{w \in S(v)} \Pr[P \in \mathcal{R}_w]$. 
Furthermore, by \Clm{node}, 
we have $\EX[Y_v \mid P \in \mathcal{R}_w] \leq c$.
The claim follows.
\end{proof}

\subsection{Entropy-sensitive comparison trees} \label{sec:entropy}

Since every linear comparison tree can be made restricted, we can incorporate 
the entropy
of $\cD$ into the lower bound. For this we define
entropy-sensitive trees, which are useful because the depth of
a node $v$ is related to the probability of the corresponding region
$\mathcal{R}_v$.

\vfill\eject

\begin{definition}
We call a restricted linear comparison tree \emph{entropy-sensitive}
if each comparison ``$p_i \in \ell^+?$'' is such that $\ell$
is a halving line for the current region $R_i$.
\end{definition}

\begin{lemma}\label{lem:entropyDepth}
Let $v$ be a node in an entropy-sensitive comparison tree, and let
$\mathcal{R}_v = R_1 \times R_2 \times \cdots \times R_n$.
Then $d_v = - \sum_{i=1}^n \log \Pr[R_i]$.
\end{lemma}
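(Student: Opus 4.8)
The plan is to induct on the depth of $v$. For the root $r$, we have $\mathcal{R}_r = \R^2 \times \cdots \times \R^2$, so each $R_i = \R^2$, $\Pr[R_i] = 1$, and $d_r = 0 = -\sum_i \log 1$; the base case holds. For the inductive step, consider an internal node $v$ at depth $d_v$ with $\mathcal{R}_v = R_1 \times \cdots \times R_n$, and suppose the comparison at $v$ is ``$p_i \in \ell^+?$'' where, by the definition of entropy-sensitive tree, $\ell$ is a halving line for $R_i$ with respect to $\cD_i$. The two children $v^+$ and $v^-$ of $v$ sit at depth $d_v + 1$ and correspond to the regions obtained from $\mathcal{R}_v$ by replacing $R_i$ with $R_i^+ := R_i \cap \ell^+$ and $R_i^- := R_i \cap \ell^-$ respectively, leaving all other coordinates unchanged.

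The key computation is then immediate from the halving property: since $\ell$ halves $R_i$,
\[
\Pr_{p_i \sim \cD_i}[p_i \in R_i^+] = \Pr_{p_i \sim \cD_i}[p_i \in R_i^-] = \tfrac{1}{2}\Pr_{p_i \sim \cD_i}[p_i \in R_i].
\]
(Here I use that the distributions are continuous, so the measure of the halving line itself is zero and the two open halfplanes account for all of $R_i$.) Applying the induction hypothesis to $v^+$, whose region differs from $\mathcal{R}_v$ only in the $i$-th coordinate,
\[
d_{v^+} = -\log\Pr[R_i^+] - \sum_{j \neq i}\log\Pr[R_j] = -\bigl(\log\Pr[R_i] - 1\bigr) - \sum_{j\neq i}\log\Pr[R_j] = 1 - \sum_{j=1}^n \log\Pr[R_j] = d_v + 1,
\]
where the last equality is exactly the inductive step, reading the chain of equalities from right to left to conclude $d_v = -\sum_j \log\Pr[R_j]$. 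The same computation works verbatim for $v^-$. This propagates the formula from any node to its children, and since every node is reached from the root by a finite path, the formula holds at all nodes.

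I do not expect a genuine obstacle here; the only point requiring a little care is the implicit claim that in a restricted entropy-sensitive tree each region $\mathcal{R}_v$ really is a product $R_1 \times \cdots \times R_n$ and that a single comparison modifies exactly one factor — but this is guaranteed by the definition of restricted tree (queries are of type (i), input-independent lines, applied to a single point $p_i$) already recorded in the excerpt, together with the fact that intersecting a product set with a halfspace constraint on one coordinate yields again a product set. One should also note $\Pr[R_i] > 0$ along any path actually realized (otherwise all lines halve $R_i$ trivially and the statement is read with the convention $\log 0 = -\infty$, or one simply restricts attention to nodes reachable with positive probability); this edge case does not affect the argument.
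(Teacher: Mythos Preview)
Your proof is correct and follows exactly the same induction-on-depth argument as the paper: verify the base case at the root, then observe that each entropy-sensitive comparison halves the measure of exactly one factor $R_i$, so $-\sum_j \log \Pr[R_j]$ increases by one while the depth also increases by one. The only slip is the phrase ``Applying the induction hypothesis to $v^+$'': you mean to apply it to the parent $v$ (to get $d_v = -\sum_j \log\Pr[R_j]$) and then deduce the formula for the child $v^+$, which is indeed what your chain of equalities does.
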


\begin{proof}
We use induction on the depth of $v$. For the root $r$
we have $d_r = 0$. Now, let $v'$ be the parent of $v$. Since
$\cT$ is entropy-sensitive, we reach $v$ after performing a comparison
with a halving line in $v'$. This halves the measure of exactly one
region in $\mathcal{R}_v$, so the sum increases by one.
\end{proof}

As in \Lem{restrictedTree}, we can make every restricted linear
comparison tree entropy-sensitive without affecting its expected depth too
much.
\begin{lemma}\label{lem:entropyTree}
Let $\cT$ a restricted linear comparison tree. Then there exists an 
entropy-sensitive comparison
tree $\cT'$ with expected depth $d_{\cT'} = O(d_\cT)$.
\end{lemma}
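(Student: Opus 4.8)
The plan is to mimic the transformation and accounting used in \Lem{restrictedTree}, but now replacing each individual halving-line comparison rather than each type-(ii)--(iv) comparison. Starting from a restricted tree $\cT$, consider a single node $v$ with comparison ``$p_i \in \ell^+?$'', where the current region for $p_i$ at $v$ is $R_i$. The line $\ell$ is an arbitrary line, not necessarily halving for $R_i$. I would replace this node by a small restricted subtree $\cT_v'$ that performs a sequence of \emph{halving}-line comparisons for the evolving region of $p_i$ and keeps going until the outcome of the original test ``$p_i \in \ell^+?$'' is determined. Concretely: let $R_i^{(0)} = R_i$; take a halving line $\ell_1$ for $R_i^{(0)}$ with the same orientation relative to $\ell$ as needed — actually, more carefully, pick any halving line $\ell_1$ for $R_i^{(0)}$; ask ``$p_i \in \ell_1^+?$''; this splits $R_i^{(0)}$ into two pieces of equal conditional probability. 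If the piece containing $p_i$ lies entirely on one side of $\ell$, the original comparison is resolved; otherwise recurse on that piece with a new halving line. Since each step halves the conditional measure, after a constant expected number of steps the current region shrinks enough to lie on one side of $\ell$ — but this is the subtle point (see below), so instead I will use the cleaner argument: at each step, the probability (conditioned on reaching that step) that $\ell$ \emph{separates} the current region is at most the probability that $\ell_k$ and $\ell$ ``disagree'' on $p_i$, and one can choose the halving line so that with probability $\ge 1/2$ the child region lies on one side of $\ell$ or we have resolved the test.

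More robustly, I would argue as follows. Fix the region $R_i$ at $v$, and consider the conditional distribution of $p_i$ on $R_i$. Let $\alpha = \Pr[p_i \in \ell^+ \mid p_i \in R_i]$ and $\beta = 1-\alpha$. The original comparison reveals $\log$-information $H(\alpha)$ in expectation (at most $1$ bit). Build the subtree $\cT_v'$ by repeatedly halving: at a node whose current region is $R'$, if $\Pr[\ell^+ \mid R'] \in \{0,1\}$ the test is resolved and we stop; otherwise ask a halving-line query for $R'$. Each halving query cuts the conditional measure of $R'$ in half. A node of $\cT_v'$ at depth $k$ corresponds to a sub-region of $R_i$ of conditional measure $2^{-k}$ that still straddles $\ell$. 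Summing the conditional measures of all straddling nodes at each level, the expected depth of $\cT_v'$ is $\sum_{k \ge 0} \Pr[\text{still straddling at depth } k]$. The key is to choose halving lines greedily so that $\Pr[\text{still straddling at depth } k] \le 2^{-\Omega(k)} \cdot$ something summable, or more simply to observe that the straddling region has conditional measure $\le \min(\alpha,\beta) \cdot 2^{-(k - O(1))}$ once we orient cuts toward $\ell$ — giving $\EX[\text{depth of }\cT_v'] = O(1 + H(\alpha)) = O(1)$. Then, exactly as in \Clm{YvBound}, define $X_v$ = indicator that $v$ is traversed, $Y_v$ = number of nodes of $\cT'$ simulating $v$; we get $\EX[Y_v \mid P \in \mathcal{R}_v] = O(1)$ and $\EX[Y_v \mid P \notin \mathcal{R}_v] = 0$, hence $\EX[Y_v] = O(\EX[X_v])$, and summing over $v$ gives $d_{\cT'} = \sum_v \EX[Y_v] = O(\sum_v \EX[X_v]) = O(d_\cT)$.

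The main obstacle I anticipate is proving that the halving-search subtree $\cT_v'$ has $O(1)$ expected depth uniformly over all starting regions $R_i$ and all lines $\ell$. The naive bound ``each cut halves the measure, so after $k$ cuts the region near $\ell$ is tiny'' is not automatic, because a halving line for $R'$ need not be the line $\ell$ itself; a badly chosen sequence of halving lines could keep cutting in a direction roughly parallel to $\ell$, never resolving the test while the straddling region stays a constant fraction. The fix is to be deliberate: when the current region $R'$ straddles $\ell$, choose the halving line $\ell_k$ to be \emph{parallel to $\ell$} — such a halving line exists (sweep a line of fixed orientation across $R'$), and it is a legal type-(i) comparison since its position depends only on the node. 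With cuts parallel to $\ell$, the straddling piece after $k$ steps is the union of the two slabs adjacent to $\ell$, each of conditional measure $2^{-k}$... wait, that still gives measure $2^{-(k-1)}$, which \emph{is} summable: $\sum_k 2^{-(k-1)} = O(1)$. So with parallel halving cuts the straddling probability at depth $k$ (conditioned on reaching $v$) is $O(2^{-k})$, giving expected depth $O(1)$ for $\cT_v'$. I would also need the small bookkeeping remark, analogous to the parenthetical in \Lem{restrictedTree}'s proof, that $\cT_v'$ may be countably infinite but its expected depth is finite, so the substitution is well-defined and the summation $d_{\cT'} = \sum_v \EX[Y_v]$ is valid.
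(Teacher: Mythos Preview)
Your proposal is correct and matches the paper's approach: replace each type-(i) comparison at a node $v$ by a binary search using halving lines \emph{parallel} to $\ell$, yielding a subtree of $O(1)$ expected depth (the paper phrases this as ``in each step the probability of success is at least $1/2$''), and then apply the same per-node accounting as in \Clm{YvBound} to sum over $v$. Your exposition wanders before settling on parallel cuts, and the ``union of two slabs'' remark is slightly off---at each depth there is exactly one straddling node, of conditional measure $2^{-k}$---but the conclusion and the overall argument are the same as the paper's.
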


\begin{proof}
The proof extends the proof of \Lem{restrictedTree}, via
an extension to \Clm{node}.
We can regard a comparison against a fixed halving
line as simpler than an comparison against an arbitrary fixed
line. Our extension of \Clm{node} is the claim
that any type (i) node can be replaced by a tree
with constant expected depth, as follows.
A comparison $p_i \in \ell^+$ 
can be replaced by a sequence of comparisons
to halving lines.
Similar to the reduction for type (ii) comparisons in \Clm{node},
this is done by binary search. That is, let $\ell_1$ be a halving line
for $R_i$ parallel to $\ell$. We compare $p_i$ with $\ell$. If this resolves the
original comparison, we declare success. Otherwise, we repeat the process
with  the halving line for the new region $R_i'$. In each step, the probability
of success is at least $1/2$. The resulting comparison tree
has constant expected depth; we now apply the construction
of \Lem{restrictedTree} to argue that for a restricted tree $\cT$
there is an entropy-sensitive version $\cT'$ whose
expected depth is larger by at most a constant factor.
%
\end{proof}

Recall that $\text{OPT}_\cD$ is the
expected depth of an optimal linear comparison tree that
computes the maxima for $P \sim \cD$.
We now describe how to characterize $\text{OPT}_\cD$ in terms of
entropy-sensitive comparison trees. 
We first state a simple property that follows
directly from the definition of certificates and the properties of
restricted comparison trees.

\begin{prop} \label{prop:dom} Consider a leaf $v$
of a restricted linear comparison tree $\cT$ computing the maxima. 
Let $R_i$ be the region 
associated with non-maximal point $p_i \in P$ in $\mathcal{R}_v$. 
There exists some region $R_j$
associated with an extremal point $p_j$ such that every point in $R_j$
dominates every point in $R_i$.
\end{prop}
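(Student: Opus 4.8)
The plan is to exploit the fact that the leaf $v$ is labeled with a valid certificate $\gamma$ for \emph{every} input $P \in \mathcal{R}_v$. Fix the non-maximal point $p_i$. The certificate $\gamma$ assigns to $p_i$ a per-point certificate, namely the index $j$ of some input point that dominates $p_i$. Since $\gamma$ must be valid for all $P\in\mathcal{R}_v$, the \emph{same} index $j$ must work for all such $P$: for every choice of $p_i\in R_i$ and $p_j\in R_j$ (and all other coordinates in their regions), $p_j$ dominates $p_i$. I would first observe that this forces every point of $R_j$ to dominate every point of $R_i$ — otherwise pick a witnessing pair $(p_i,p_j)$ and extend it to a full input $P\in\mathcal{R}_v$ (possible since $\mathcal{R}_v$ is a product of the $R_k$'s), contradicting validity of $\gamma$ at that leaf.

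The second step is to argue that $p_j$ is an \emph{extremal} (maximal) point, i.e.\ that $j$ appears in the sorted list of maxima recorded in $\gamma$. Suppose not; then $p_j$ is itself non-maximal, so $\gamma$ gives $p_j$ a per-point certificate: an index $k$ with $p_k$ dominating $p_j$ for all $P\in\mathcal{R}_v$, hence (by the same product-region argument) every point of $R_k$ dominates every point of $R_j$, and therefore every point of $R_k$ dominates every point of $R_i$. Dominance is transitive, so we may repeat: following per-point certificates produces a chain $R_i, R_j, R_k, \ldots$ in which each region's points dominate all points of the previous region. Since domination is a strict partial order (no point dominates itself) and each region is nonempty, the indices in this chain are distinct, so after at most $n$ steps the chain must terminate — and it can only terminate at a point whose $\gamma$-certificate is membership in the maxima list, i.e.\ an extremal point $p_{j'}$ with every point of $R_{j'}$ dominating every point of $R_i$. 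Renaming $j'$ to $j$ gives the proposition.

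The only mild subtlety — and the step I would be most careful about — is the passage from "the fixed index $j$ dominates $p_i$ for the particular sampled input" to "every point of $R_j$ dominates every point of $R_i$." This is exactly where restrictedness is essential: because $\mathcal{R}_v = R_1\times\cdots\times R_n$ is a genuine Cartesian product, any pair $(p_i^\ast, p_j^\ast)\in R_i\times R_j$ can be completed to an input in $\mathcal{R}_v$, so a single exceptional pair already breaks the certificate. (Continuity of the distributions, assumed throughout, lets us ignore boundary/measure-zero annoyances and treat the closed dominance relation cleanly.) Everything else is bookkeeping: finiteness of the chain, transitivity of dominance, and the definition of certificate. I do not expect any real obstacle here; the proposition is essentially an unwinding of Definitions~\ref{def:cert} and~\ref{def:opt} in the restricted setting.
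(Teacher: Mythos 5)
The paper itself offers no proof of this proposition; it simply asserts that it ``follows directly from the definition of certificates and the properties of restricted comparison trees,'' so there is no official argument to compare against. Your proof is correct and is the natural unwinding of that assertion. The core step --- that since $\mathcal{R}_v = R_1 \times \cdots \times R_n$ is a genuine Cartesian product, any pair $(p_i^\ast, p_j^\ast) \in R_i \times R_j$ extends to a full input in $\mathcal{R}_v$, so validity of the leaf's certificate at that input forces every point of $R_j$ to dominate every point of $R_i$ --- is precisely the ``property of restricted trees'' the authors have in mind. Your chain-following step is a genuinely useful addition: Definition~\ref{def:cert} only requires the per-point certificate to name \emph{some} dominating input point, not necessarily a maximal one, so one does have to iterate through per-point certificates, using transitivity of dominance and its irreflexivity (hence acyclicity of the chain) to terminate at an index in the maxima list after at most $n$ steps. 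This handles a case that the paper's ``follows directly'' silently absorbs. One minor remark: the appeal to continuity of the distributions is unnecessary here --- the argument is purely about the regions $R_k$ and works for any nonempty regions, since dominance is defined with weak inequalities plus one strict, so there is no boundary or measure-zero issue to clean up.
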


We now enhance the
notion of a certificate (\Def{cert}) to make it more
useful for our algorithm's analysis.
For technical reasons, we want points to be ``well-separated"
according to the slab structure $\bS$. By \Prop{dom},
every non-maximal point is associated with a dominating
region.

\begin{definition} \label{def:s-label}
Let $\bS$ be a slab structure. A certificate for an 
input $P$ is called \emph{$\bS$-labeled} if the following holds.
Every
maximal point is labeled with the leaf slab of $\bS$ containing it.
Every non-maximal point is either placed in the containing leaf slab,
or is separated from a dominating region by a slab boundary.
\end{definition}

We naturally extend this to trees that compute the $\bS$-labeled
maxima. 

\begin{definition}
A linear comparison tree $\cT$ \emph{computes} the $\bS$-labeled maxima
of $P$ if each leaf $v$ of  $\cT$ is labeled with a 
$\bS$-labeled certificate that is valid for every possible input
$P \in \mathcal{R}_v$.
\end{definition}

\begin{lemma} \label{lem:comp} 
There exists an entropy-sensitive comparison tree $\cT$ computing the 
$\bS$-labeled maxima whose expected depth over $\cD$ is $O(n + \textup{\OPT}_\cD)$.
\end{lemma}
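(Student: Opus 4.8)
The plan is to combine the reduction machinery already established (Lemmas~\ref{lem:restrictedTree} and \ref{lem:entropyTree}) with a cheap preprocessing step that attaches the slab labels. Starting from an optimal linear comparison tree $\cT$ for the (unlabeled) maxima of $P\sim\cD$, with expected depth $\OPT_\cD$, apply \Lem{restrictedTree} to obtain a restricted tree of expected depth $O(\OPT_\cD)$, and then \Lem{entropyTree} to obtain an entropy-sensitive tree $\cT_0$ of expected depth $O(\OPT_\cD)$ that computes the (unlabeled) maxima. It remains to augment $\cT_0$ so that every leaf carries an $\bS$-labeled certificate rather than a plain one, paying only an additive $O(n)$ in expected depth and preserving the entropy-sensitive property.

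The first step is to bound the cost of locating each point in $\bS$. At a leaf $v$ of $\cT_0$ we have $\mathcal{R}_v = R_1\times\cdots\times R_n$, and we want to refine each $R_i$ until either $R_i$ lies inside a single leaf slab of $\bS$ or (for a non-maximal $p_i$) $R_i$ is separated by a slab boundary from the dominating region $R_j$ guaranteed by \Prop{dom}. Both conditions are ``positional'' and can be enforced by comparisons of $p_i$ against vertical lines of $\bS$; since vertical lines are input-independent, these are type~(i) comparisons, and we may further replace each by a binary search against halving lines parallel to it (exactly as in the proof of \Lem{entropyTree}), so the augmented tree stays entropy-sensitive. The key quantitative point: by \Lem{entropyDepth}, the depth of leaf $v$ in $\cT_0$ already equals $-\sum_i\log\Pr[R_i]$, so the additional depth incurred to shrink $R_i$ down to a leaf slab $\lambda$ is $O\bigl(-\log\Pr[p_i\in\lambda]+\log\Pr[p_i\in R_i]\bigr)$ in expectation; summing the geometric-series tails over the halving steps, the extra expected depth contributed by coordinate $i$, averaged over $P\sim\cD$, is $O\bigl(1+H(\text{leaf slab of }p_i)\bigr)$ up to the already-paid $-\log\Pr[R_i]$ terms, which telescope into $\cT_0$'s own depth.

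The second step is to control $\sum_i H(\text{leaf slab of }p_i)$, the total entropy of the slab-positions of the $n$ points. Here I would invoke \Lem{slabstruct}: the slab structure $\bS$ has $O(n)$ leaf slabs and $\EX[X_\lambda^2]=O(1)$ for every leaf slab $\lambda$, which in particular gives $\EX[X_\lambda]=O(1)$; a standard entropy bound (each $p_i$ lands in one of $O(n)$ slabs, and $\sum_\lambda \EX[X_\lambda]=n$ with all expectations $O(1)$) yields $\sum_i H(\text{slab of }p_i)=O(n)$. Intuitively, because no slab is ``too likely'' to contain many points, the per-point slab entropy is $O(1)$ on average. Thus the augmentation adds $O(n)$ to the expected depth, and the resulting entropy-sensitive tree $\cT$ has expected depth $O(n+\OPT_\cD)$ and computes the $\bS$-labeled maxima, as required. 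One has to check that \Prop{dom}'s dominating region $R_j$, which lives at a leaf of $\cT_0$ (hence only after $\cT_0$'s comparisons), is available when we do the separation; this is fine since we append the slab-locating comparisons \emph{below} each leaf of $\cT_0$, where $R_j$ is already determined.

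The main obstacle I anticipate is the bookkeeping in the first step: making precise that the ``binary search to a leaf slab'' can be interleaved with (or appended to) the entropy-sensitive structure without double-counting depth, and that the telescoping of the $-\log\Pr[R_i]$ terms against $\cT_0$'s own depth (\Lem{entropyDepth}) is valid leaf-by-leaf and survives taking expectations over $\cD$. Concretely, one wants a clean statement of the form ``from any leaf of an entropy-sensitive tree, locating all $n$ points in $\bS$ and performing the \Prop{dom} separations costs expected additional depth $O\bigl(n+\sum_i H(\text{slab of }p_i)\bigr)=O(n)$,'' proved by the same geometric-tail argument as in \Clm{node}; everything else is an application of already-established lemmas.
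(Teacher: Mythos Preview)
Your approach has a genuine gap in the second step: the claim that $\sum_i H(\text{slab of }p_i)=O(n)$ is false. Take all $\cD_i$ equal to a common continuous distribution. Then $\bS$ has $\Theta(n)$ leaf slabs each carrying probability $\Theta(1/n)$ under every $\cD_i$, so $H(\text{slab of }p_i)=\Theta(\log n)$ for every $i$ and the sum is $\Theta(n\log n)$; yet $\EX[X_\lambda]=O(1)$ for all $\lambda$, so \Lem{slabstruct} is perfectly happy. The bounded-second-moment property of $\bS$ simply does not control per-point slab entropy.

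What you are missing is that the $\bS$-labeling never requires locating an arbitrary point in its leaf slab. For the maximal points, the certificate already lists them in sorted $x$-order, so a single \emph{merge} of this sorted list with the $O(n)$ slab boundaries labels every maximal point using $O(n)$ comparisons in total, regardless of the individual slab entropies. For a non-maximal $p_i$, \Prop{dom} gives a dominating region $R_j$; if $\lambda$ is the leaf slab containing $R_j$, the $x$-extent of $R_i$ cannot reach to the right of $\lambda$, so one comparison of $p_i$ against the left boundary of $\lambda$ either puts $R_i$ inside $\lambda$ or separates it from $R_j$ by that boundary. That is one comparison per non-maximal point, $O(n)$ in total. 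This is exactly what the paper does: it appends these $O(n)$ type-(i) comparisons at every leaf of the optimal \emph{general} tree to get a tree of expected depth $\OPT_\cD+O(n)$ computing the $\bS$-labeled maxima, and only \emph{then} applies Lemmas~\ref{lem:restrictedTree} and~\ref{lem:entropyTree}. Doing the augmentation before the reductions also spares you the halving-line bookkeeping you worried about; the reductions absorb it automatically.
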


\begin{proof} Start with an optimal linear comparison tree $\cT'$
that computes the maxima. At every leaf, we have a list $M$ with the maximal
points in sorted order. 
We merge $M$ with the list of slab boundaries of $\bS$ to label
each maximal point with the leaf slab of $\bS$ containing it. 
We now deal with the non-maximal points. Let $R_i$ be the region
associated with a non-maximal point $p_i$, and $R_j$ be the
dominating region. Let $\lambda$ be the leaf slab containing $R_j$.
Note that the $x$-projection of $R_i$ cannot extended to the right of $\lambda$.
If there is no slab boundary separating $R_i$ from $R_j$, then 
$R_i$ must intersect $\lambda$. With one more comparison, we can
place $p_i$ inside $\lambda$ or strictly to the left of it.
All in all, with $O(n)$ more comparisons than $\cT'$, we have a tree $\cT''$
that
computes the $\bS$-labeled maxima. Hence, the expected depth is $\OPT_\cD + O(n)$. 
Now we apply 
Lemmas~\ref{lem:restrictedTree} and \ref{lem:entropyTree} to $\cT''$ 
to get an entropy-sensitive comparison tree $\cT$ computing the 
$\bS$-labeled maxima with expected depth $O(n + \OPT_\cD)$.
\end{proof}

\section{The algorithm}
\label{sec:algorithm}

In the learning phase, the algorithm constructs a slab structure $\bS$ and
search trees $T_i$, as given in Lemmas~\ref{lem:slabstruct} and~\ref{lem:tree}.
Henceforth, we assume that we have these data structures, and will describe
the algorithm in the limiting (or stationary) phase.
Our algorithm proceeds by searching
progressively each point $p_i$ in its tree $T_i$. 
However, we need to choose the order of 
the searches carefully. 

At any stage of the algorithm, each point $p_i$ is placed in some slab 
$S_i$.  The algorithm maintains a set $A$ of \emph{active points}. An
inactive point is either proven to be non-maximal, or it has been placed in 
a leaf slab. The active points are stored in a data structure $L(A)$.
This structure is similar to a heap and supports the operations 
\textit{delete}, \textit{decrease-key}, and \textit{find-max}. The key 
associated with an active point $p_i$ is the  
right boundary of the slab $S_i$ 
(represented as an element of $[|\bS|]$).

We list the variables that the algorithm maintains.
The algorithm is initialized with $A = P$, and each $S_i$ is the largest 
slab in $\bS$.  Hence, all points have  key $|\bS|$, and we insert all these keys 
into $L(A)$. 

\medskip
\begin{asparaitem}
	\item $A, L(A)$: the list $A$ of active points stored in data structure $L(A)$.
	\item $\wlambda, B$: Let $m$ be the largest key among the active points. Then $\wlambda$ is the leaf slab 
	whose right boundary is $m$ and
	$B$ is a set of points located in $\wlambda$. Initially $B$ is empty and $m$ is $|S|$, corresponding to the
	 $+\infty$ boundary of the rightmost, infinite, slab.
	\item $M, \hat{p}$: $M$ is a sorted (partial) list of currently discovered maximal points
	and $\hat{p}$ is the leftmost among those. Initially $M$ is empty and $\hat{p}$ is a ``null'' point
	that dominates no input point.
\end{asparaitem}
\medskip

The algorithm involves a main procedure \textbf{Search}, and an auxiliary procedure
\textbf{Update}. The procedure \textbf{Search} chooses a point and proceeds
its search by a single step in the appropriate tree. Occasionally, it will invoke \textbf{Update}
to change the global variables. The algorithm repeatedly calls \textbf{Search}
until $L(A)$ is empty. 
After that, we perform a final call to \textbf{Update} in order to
process any points that might still remain in $B$.



\medskip

\noindent
\textbf{Search}.  
Let $p_i$ be obtained by performing 
a \emph{find-max} in $L(A)$. 
If the maximum key $m$ in $L(A)$ is less than the right 
boundary of $\widehat{\lambda}$, we invoke \textbf{Update}. 
If $p_i$ is dominated by $\hat{p}$, we delete $p_i$ from $L(A)$.
If not, we advance 
the search of $p_i$ in $T_i$ by a single step, if possible. 
This updates the slab $S_i$. If 
the right boundary of $S_i$ has decreased, we perform the appropriate 
\emph{decrease-key} operation on $L(A)$. (Otherwise, we do nothing.)

Suppose the point $p_i$ reaches a leaf slab $\lambda$. 
If $\lambda = \widehat{\lambda}$, we remove $p_i$ from 
$L(A)$ and insert it
in $B$ (in time $O(|B|)$). Otherwise, we leave $p_i$ in $L(A)$.

\medskip
	
\noindent
\textbf{Update}. 
We sort all the points in $B$ and update the list of current maxima.
As \Clm{order} will show, we have the sorted list of maxima
to the right of $\wlambda$. Hence, we can append to this list in $O(|B|)$
time.
%
We reset $B = \emptyset$,
set $\widehat{\lambda}$ to the leaf slab to the left of $m$, 
and return.
%
%

\medskip

We prove some preliminary claims. We state an important invariant maintained by the algorithm,
and then give a construction for the data structure $L(A)$. 

\begin{claim} \label{clm:order} 
At any time in the algorithm, 
the maxima of all points to the right of $\wlambda$ have been determined in sorted order.
\end{claim}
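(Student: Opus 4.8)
The plan is to prove \Clm{order} by induction on the sequence of atomic operations of the algorithm (each single step of \textbf{Search} and each call to \textbf{Update}), while carrying along a strengthened invariant. Besides the stated property, the invariant will record: (a) immediately after the synchronization test at the top of \textbf{Search}, the right boundary of $\wlambda$ equals the maximum key $m$ in $L(A)$ (this uses that the sequence of right boundaries of $\wlambda$ decreases, that $m$ only decreases between calls to \textbf{Update}, and that \textbf{Update} resets $\wlambda$ to have right boundary exactly $m$); (b) for every active point $p_i$, the leaf slab that truly contains $p_i$ lies in $\wlambda$ or strictly to its left — because $S_i$ always contains the true leaf slab of $p_i$, so $p_i$'s key (the right boundary of $S_i$) is at least the right boundary of that leaf slab, and at most $m$; (c) $B$ is exactly the set of points already located in their true leaf slab in the case that this slab is the current $\wlambda$; and (d) $M$ is the sorted list of all maximal points whose true leaf slab lies strictly to the right of $\wlambda$, and $\hat p$ is its leftmost element.

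First I would check the base case: initially $\wlambda$ is the rightmost (infinite) slab, $B=M=\emptyset$, and no point lies strictly to its right, so the invariant holds vacuously. Then I would run through the operation types. A \textbf{Search} step that only advances one search by a node, or deletes a point dominated by $\hat p$, or moves a point into $B$, leaves $\wlambda$ and $M$ untouched; by (b), any point that is deleted or just relocated has its true leaf slab in $\wlambda$ or to its left, so no point strictly to the right of $\wlambda$ is affected, and the sorted-maxima property is preserved for free. (For the $\hat p$-deletion I would also note that $\hat p$'s true slab is strictly to the right of $p_i$'s by (b) and (d), so $x(\hat p)>x(p_i)$ and the test $y(\hat p)\ge y(p_i)$ correctly certifies non-maximality.) The one substantive case is a call to \textbf{Update}. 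When it fires we have $m<$ the right boundary of $\wlambda_{\mathrm{old}}$, so by (b) no active point has its true leaf slab in $\wlambda_{\mathrm{old}}$ nor in any slab strictly between $\wlambda_{\mathrm{new}}$ and $\wlambda_{\mathrm{old}}$; hence every point with true slab in that range is inactive, and a \emph{maximal} such point cannot have been deleted by $\hat p$-domination (being maximal, it is dominated by nobody), so it left $L(A)$ only by entering $B$. Combining this with (c) and the fact that each earlier \textbf{Update} flushed $B$ into $M$, every maximal point strictly to the right of $\wlambda_{\mathrm{new}}$ is already recorded in $M$, and every maximal point of $\wlambda_{\mathrm{old}}$ is in the \emph{current} $B$. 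Finally I would observe that among all points strictly to the right of $\wlambda$ the leftmost one is the tallest (maxima decrease in $y$ from left to right, and any non-maximal point there is dominated by a maximum that also lies strictly to its right, hence below $\hat p$), so $\hat p$ is the tallest; therefore a point of $B$ is globally maximal iff it is not dominated from within $B$ and has $y$-coordinate above $y(\hat p)$, which is exactly what \textbf{Update}'s right-to-left sweep of sorted $B$ (seeded with $\hat p$) computes. Since the new maxima all lie to the left of everything in $M$, prepending keeps $M$ sorted, re-establishing (d) — and hence \Clm{order} — for $\wlambda_{\mathrm{new}}$.

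The main obstacle is precisely the bookkeeping in the \textbf{Update} case: making rigorous that when $\wlambda$ jumps left past several leaf slabs, no bypassed slab (nor $\wlambda_{\mathrm{old}}$) hides an undiscovered maximal point. This hinges on tying together two facts cleanly: a point's key never drops below the right boundary of its true leaf slab (so an active point's true slab is never strictly to the right of $\wlambda$), and a point enters $B$ \emph{exactly} when its search reaches its true leaf slab and that slab equals the then-current $\wlambda$ — which, together with the monotone leftward motion of $\wlambda$ and the flushing of $B$ at every \textbf{Update}, forces every maximal point whose true slab is $\wlambda_{\mathrm{old}}$ to still be sitting in the current $B$ at the moment \textbf{Update} is invoked. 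Everything else — the base case, the non-\textbf{Update} steps, and the correctness of the sorted sweep inside \textbf{Update} — is routine once the invariant is stated in the strengthened form above.
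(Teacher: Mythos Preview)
Your approach is correct and is a more explicit version of the paper's argument. The paper gives a terse backward induction on $m$, the right boundary of $\wlambda$: assuming the claim for $m$, it notes that each point processed with key $m$ either has its key reduced, is dominated by $\hat p$, or lands in $B$, so when the max key drops below $m$ every point of $\wlambda$ is certified non-maximal or sits in $B$. Your forward induction on atomic operations with the strengthened invariants (a)--(d) makes explicit exactly what the paper's proof uses implicitly, and you are in fact more careful than the paper about the one subtlety both arguments must handle---the case where $\wlambda$ jumps over several leaf slabs (the paper's final line ``all maximal points to the right of $m-1$'' tacitly treats $m$ as decreasing by one). One expository slip to tighten: the sentence ``every maximal point strictly to the right of $\wlambda_{\mathrm{new}}$ is already recorded in $M$'' is not literally correct before \textbf{Update} runs, since maxima in $\wlambda_{\mathrm{old}}$ are still in $B$; the conclusion your listed facts actually yield is that a maximal point in a \emph{skipped} slab would have to be inactive yet could never have entered $B$ (since $\wlambda$ was never its slab) and was never dominated by $\hat p$, a contradiction---so skipped slabs contain no maxima at all. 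Making that contradiction explicit closes the one small gap.
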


\begin{proof} The proof is by backward induction on $m$, the right
boundary of $\wlambda$.
When $m = |S|$, then this is trivially true. Let us assume
it is true for a given value of $m$, and trace the algorithm's
behavior until the maximum key becomes smaller than $m$ (which is done
in {\bf Update}). When {\bf Search}
processes a point $p$ with a key of $m$ then either (i) the key value decreases; 
(ii) $p$ is dominated by $\hat{p}$; or (iii) $p$ is eventually placed in 
$\wlambda$
(whose right boundary is $m$). In all cases, when the maximum key decreases
below $m$, all points in $\wlambda$ are either proven to be non-maximal
or are in $B$. By the induction hypothesis, we already have a sorted
list of maxima to the right of $m$.
The procedure {\bf Update} will sort the points in $B$
and all maximal points to the right of $m-1$ will be determined.
%
%
\end{proof}

\begin{claim} \label{clm:ds} 
Suppose there are $x$ \emph{find-max} operations and $y$ \emph{decrease-key}
operations.
We can implement the data structure $L(A)$ such that the total time 
for the operations is $O(n + x + y)$. The storage requirement is
$O(n)$.
\end{claim}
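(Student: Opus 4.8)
The plan is to implement $L(A)$ as a bounded, monotone bucket queue, taking advantage of three facts: the keys are integers in $[|\bS|]$; $|\bS| = O(n)$ by \Lem{slabstruct}; and, after the bulk initialization, a point's key only ever decreases (keys are right boundaries of the shrinking slabs $S_i$, and a point is never re‑inserted with a larger key).

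Concretely, first I would allocate an array of $|\bS|$ doubly linked lists, where the $k$-th list, $\text{bucket}[k]$, holds the active points whose current key equals $k$; each active point stores its current key and a pointer to its own list node. Then \emph{delete} is removal of that node ($O(1)$), and \emph{decrease-key} is removal from the current bucket followed by insertion at the head of a strictly smaller bucket ($O(1)$). Initialization places all $n$ points in $\text{bucket}[|\bS|]$, which together with allocating the array costs $O(n)$. I would also maintain a pointer $\text{maxptr}$ subject to the invariant that every live key is at most $\text{maxptr}$ and that $\text{bucket}[k]$ is empty for all $k > \text{maxptr}$. A \emph{find-max} then decrements $\text{maxptr}$ until $\text{bucket}[\text{maxptr}]$ is nonempty (or $\text{maxptr}$ reaches $0$), and returns $\text{maxptr}$ together with the head of that bucket; both \emph{delete} and \emph{decrease-key} preserve the invariant automatically, since they only move points into smaller buckets or remove them.

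For the running time, the key observation is that $\text{maxptr}$ is nonincreasing over the entire execution — it starts at $|\bS|$ and nothing is ever placed above it — so the decrements performed across all \emph{find-max} calls total at most $|\bS| = O(n)$. Adding $O(1)$ of bookkeeping for each of the $x$ \emph{find-max} and $y$ \emph{decrease-key} calls, plus at most $n$ \emph{delete} calls (each point is deleted at most once) at $O(1)$ each, plus the $O(n)$ initialization, gives total time $O(n + x + y)$. The storage is the size-$|\bS| = O(n)$ array plus one list node per active point, hence $O(n)$. The only point needing a little care — and the closest thing to an obstacle — is verifying the $\text{maxptr}$ invariant across a \emph{find-max} that does not remove the element it returns; this is exactly where one uses that no point's key ever increases after initialization, so that buckets above $\text{maxptr}$, once emptied, stay empty, and the amortized decrement bound holds.
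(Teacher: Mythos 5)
Your proposal is correct and essentially identical to the paper's: both implement $L(A)$ as a bucket array indexed by the $O(n)$ possible key values, maintain a current-maximum pointer, and amortize the downward scan to $O(n)$ total by noting that (since only \emph{decrease-key} operations occur) the maximum is nonincreasing. The only cosmetic difference is that you scan lazily inside \emph{find-max} while the paper scans eagerly after \emph{delete}/\emph{decrease-key}, which does not affect the analysis.
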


\begin{proof} 
We represent $L(A)$ as an array of lists. For every $k \in [|\bS|]$,
we keep a list of points whose key values are $k$. We maintain $m$, 
the current maximum key. The total storage is $O(n)$. A \emph{find-max} can 
trivially be done in $O(1)$ time, and an \emph{insert} is done by adding the 
element to the appropriate list.  A \emph{delete} is done by deleting the 
element from the list (supposing appropriate pointers are available). 
We now have to update the maximum. If the list 
at $m$ is non-empty, no action is required. If it is empty,
we check sequentially whether the list at $m-1, m-2, \ldots$ is empty. 
This will eventually lead to the maximum. To do a \emph{decrease-key}, we 
\emph{delete}, \emph{insert}, and then update the maximum.

Note that since all key updates are \emph{decrease-key}s, the maximum 
can only decrease.  Hence, the total overhead for scanning for a new maximum 
is $O(n)$.
\end{proof}

\subsection{Running time analysis} \label{sec:runtime}

The aim of this section is to prove the following lemma.

\begin{lemma} \label{lem:algo} 
The algorithm runs in $O(n+\textup{\OPT}_\cD)$ time.
\end{lemma}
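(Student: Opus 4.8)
\textbf{The plan.} The running time splits into two parts: the total number $N=\sum_i D_i$ of search steps performed in the trees $T_i$, where $D_i$ is the depth reached in $T_i$ at the moment $p_i$ leaves $L(A)$, and all the remaining bookkeeping. First I would bound the bookkeeping by $O(n+N)$ plus an expected $O(n)$ term. Each call to \textbf{Search} does one \emph{find-max}, possibly one \textbf{Update}, and then a \emph{delete}, a move of a point into $B$, or one search step; since $\wlambda$ only moves left there are $O(|\bS|)=O(n)$ calls to \textbf{Update}, each point is removed from $L(A)$ only once, and there are exactly $N$ search steps, so there are $O(n+N)$ \textbf{Search} calls and $O(n+N)$ operations on $L(A)$, which by \Clm{ds} take $O(n+N)$ time ($x=O(n+N)$ and $y\le N$). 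The remaining cost is filling and sorting $B$: when \textbf{Update} empties $B$ it holds a subset of the points of $P$ in the current leaf slab $\lambda$, so the total cost over the run is $O(\sum_\lambda X_\lambda^2)$, which is $O(n)$ in expectation by \Lem{slabstruct}; merging $M$ with the slab boundaries costs $O(n)$ overall. Hence the expected running time is $O(n)+O(\EX[N])$.

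It remains to show $\EX[N]=O(n+\OPT_\cD)$. By \Lem{comp} there is an entropy-sensitive comparison tree $\cT$ computing the $\bS$-labeled maxima with $d_\cD(\cT)=O(n+\OPT_\cD)$, and its leaves partition the input space. Fix a leaf $v$ with $\mathcal{R}_v=R_1\times\cdots\times R_n$; by \Lem{entropyDepth}, $d_v=-\sum_i\log\Pr[p_i\in R_i]$, and I condition on $P\in\mathcal{R}_v$. For each $i$, the $\bS$-labeled certificate at $v$ together with \Prop{dom} supplies an interval $S_i^\ast$ of leaf slabs with $R_i\subseteq\bigcup_{\lambda\in S_i^\ast}\lambda$: either $R_i$ lies inside a single leaf slab (then $S_i^\ast$ is that slab), or $p_i$ is non-maximal and $R_i$ is separated by a slab boundary $b$ from a dominating region $R_j$ of a maximal point $p_j$ (then $S_i^\ast$ is the set of leaf slabs left of $b$). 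Conditioned on $p_i\in R_i$, the induced law of $p_i$'s leaf slab is an $S_i^\ast$-restricted distribution in the sense of \Def{rest}, with $q'_{S_i^\ast}=\Pr[p_i\in R_i]$. The key claim is that $D_i$ is at most the number of steps of an $S_i^\ast$-restricted search for $p_i$ in $T_i$: once the current slab $S_i$ of $p_i$ is contained in $S_i^\ast$, the algorithm never advances $p_i$ again. When $S_i^\ast$ is a single leaf slab this is immediate, since $p_i$ then sits in a leaf slab and \textbf{Search} cannot advance it. Otherwise I argue that the next time $p_i$ is returned by \emph{find-max} it is deleted because $\hat p$ dominates it: at that moment $m$ equals the key of $p_i$, so (after the possible call to \textbf{Update}) $\wlambda$ has right boundary $m\le b$; since $R_j$ lies right of $b$, $p_j$ lies in a leaf slab right of $\wlambda$, so by \Clm{order} it has been confirmed maximal and placed in $M$; as $\hat p$ is the leftmost point of $M$ and any two maximal points are incomparable (so along the maxima staircase the $y$-coordinate decreases as the $x$-coordinate increases), $y(\hat p)\ge y(p_j)\ge y(p_i)$, while every point of $M$ lies right of $\wlambda$, so $x(\hat p)\ge m>x(p_i)$; hence $\hat p$ dominates $p_i$. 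Given the claim, \Lem{tree} (each $T_i$ is $O(1/\eps)$-optimal for restricted searches over $\cD_i$) gives $\EX[D_i\mid P\in\mathcal{R}_v]\le c\eps^{-1}\bigl(1-\log\Pr[p_i\in R_i]\bigr)$; summing over $i$ yields $\EX[N\mid P\in\mathcal{R}_v]\le c\eps^{-1}(n+d_v)$, and averaging over the leaves $v$ of $\cT$ (which partition the input space, with $\sum_v\Pr[\mathcal{R}_v]d_v=d_\cD(\cT)$) gives $\EX[N]\le c\eps^{-1}\bigl(n+d_\cD(\cT)\bigr)=O(n+\OPT_\cD)$ for fixed $\eps$.

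Finally I would note that the bounds on $L(A)$, on $B$, and on the $T_i$ use the success of the learning phase (the events of \Lem{slabstruct} and \Lem{tree}, failing with probability $O(n^{-3})$); on the complementary event the algorithm still runs correctly in, say, $O(n\log n)$ worst-case time, contributing $o(1)$ to the expectation.

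\textbf{Main obstacle.} The delicate step is the key claim for non-maximal points: that the interleaved search terminates each $p_i$ no later than when its slab is pinned inside $S_i^\ast$. This is exactly where the carefully chosen right-to-left processing order, the invariant of \Clm{order}, and the geometry of the maxima staircase are needed, and it is what ties the running time to the depth $d_v$ of the optimal restricted tree rather than to a wasteful full sort.
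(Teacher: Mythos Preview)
Your proof is correct and follows essentially the same approach as the paper: bound the bookkeeping via \Clm{ds} and the $\sum_\lambda X_\lambda^2$ argument of \Lem{slabstruct}, then condition on a leaf $v$ of the entropy-sensitive tree from \Lem{comp}, show that each $p_i$'s search is at most an appropriately restricted search, and invoke the restricted-search optimality of the $T_i$'s together with \Lem{entropyDepth}. Your write-up is actually more explicit than the paper's in two places---the staircase argument that $\hat p$ (not just $p_j$) dominates $p_i$, and the accounting of \textbf{Search} calls---and your choice of $S_i^\ast$ as ``all leaf slabs left of $b$'' differs cosmetically from the paper's ``smallest slab containing $R_i$,'' but since both contain $R_i$ the value $q'_{S_i^\ast}=\Pr[p_i\in R_i]$ is the same and the bound is identical.
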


We can easily bound the running time of all calls to {\bf Update}.

\begin{claim} \label{clm:update} The expected time for
all calls to {\bf Update} is $O(n)$.
\end{claim}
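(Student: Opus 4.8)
The plan is to charge the total cost of all \textbf{Update} calls to two sources: a constant overhead per call, and the cost of sorting the set $B$ inside each call. A single invocation of \textbf{Update} does $O(1)$ bookkeeping (emptying $B$, moving $\wlambda$, returning), spends $O(|B|\log|B|)$ time sorting the points currently in $B$ by $x$-coordinate, and then $O(|B|)$ more time to sweep the sorted list --- deciding which points of $B$ are maximal, prepending them to $M$ and resetting $\hat p$, and certifying the rest by points to their right, which is legitimate since by \Clm{order} all maxima to the right of $\wlambda$ are already known in sorted order. Hence the total time over all calls is $O(N_U) + O\!\bigl(\sum_t |B_t|\log|B_t|\bigr)$, where $N_U$ is the number of \textbf{Update} calls and $B_t$ is the content of $B$ at the $t$-th call.

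First I would show $N_U = O(n)$. A call to \textbf{Update} from within \textbf{Search} occurs only when the current maximum key $m$ in $L(A)$ is strictly less than the right boundary of $\wlambda$; the call then moves $\wlambda$ to the leaf slab whose right boundary equals $m$, i.e., strictly to the left of its previous value. Since all key updates are \emph{decrease-key}s, $\wlambda$ only ever moves left among the $|\bS| = O(n)$ leaf slabs of $\bS$, so there are at most $|\bS|$ such calls, plus one final call after $L(A)$ empties. Thus $N_U = O(n)$, and the constant-overhead contribution is $O(n)$.

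Next I would bound $\EX\bigl[\sum_t |B_t|\log|B_t|\bigr]$. The structural facts are: (a) throughout the run of \textbf{Search} calls between two consecutive \textbf{Update} calls, $\wlambda$ is a fixed leaf slab, and $B$ accumulates exactly those processed points whose search in $T_i$ locates them in that leaf slab --- which, since $T_i$ is a search tree over $\bS$, is precisely the leaf slab containing $p_i$; (b) once a point enters $B$ it is removed from $L(A)$ and never processed again, so it enters $B$ at most once; (c) by the monotone leftward movement of $\wlambda$ above, these fixed leaf slabs are pairwise distinct across the at most $N_U$ runs. Letting $\lambda_t$ be the leaf slab associated with $B_t$, facts (a)--(b) give $|B_t| \le X_{\lambda_t}$, and by (c) the $\lambda_t$ are distinct; using $x\log x \le x^2$,
\[
\sum_t |B_t|\log|B_t| \;\le\; \sum_t X_{\lambda_t}^2 \;\le\; \sum_{\lambda \in \bS} X_\lambda^2 .
\]
By \Lem{slabstruct} (whose high-probability event we assume, as throughout the algorithm's analysis) $\EX[X_\lambda^2] = O(1)$ for each of the $O(n)$ leaf slabs, so the expectation of the right-hand side is $O(n)$. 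On the exceptional $n^{-3}$-probability event that $\bS$ is bad, each \textbf{Update} sorts at most $n$ points and $N_U = O(n)$, contributing at most $n^{-3}\cdot\poly(n) = o(n)$ to the expectation. Combining with the $O(N_U) = O(n)$ overhead gives expected total \textbf{Update} time $O(n)$.

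The main obstacle is the third step: the easy bound $\sum_t |B_t| \le n$ (each input point lies in exactly one leaf slab) is not enough, since a single \textbf{Update} could in principle have to sort $\Theta(n)$ points, so one must control $\sum_t |B_t|\log|B_t|$ rather than $\sum_t |B_t|$. This is exactly where the second-moment guarantee $\EX[X_\lambda^2]=O(1)$ of \Lem{slabstruct} enters; the remaining work is purely the ``distinct leaf slab'' bookkeeping in (a)--(c), ensuring the per-slab second moments are summed without double counting.
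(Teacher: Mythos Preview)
Your proof is correct and follows essentially the same approach as the paper: bound the total \textbf{Update} cost by $\sum_{\lambda \in \bS} X_\lambda^2$ and invoke \Lem{slabstruct} to get $\EX\bigl[\sum_\lambda X_\lambda^2\bigr] = O(n)$. You supply considerably more detail than the paper's four-line argument---in particular the explicit $N_U = O(n)$ bound, the distinct-leaf-slab bookkeeping (a)--(c), and the handling of the bad-$\bS$ event---but the core idea is identical.
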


\begin{proof}
The total time taken for all calls to \textbf{Update} is at most
the time taken to sort points within leaf slabs. By \Lem{slabstruct}, 
this takes expected time
\[
  \EX \Bigl[\sum_{\lambda \in \bS} X_\lambda^2\Bigr]
=  \sum_{\lambda \in \bS} \EX\bigl[X_\lambda^2\bigr]
=  \sum_{\lambda \in \bS} O(1)
=  O(n).
\] 
\end{proof}

The important claim is the following, since it allows us to relate the time
spent by {\bf Search} to the entropy-sensitive comparison trees.
\Lem{algo} follows directly from this.

\begin{claim} \label{clm:search} Let $\mathcal{T}$ be an
entropy-sensitive comparison tree computing $\bS$-labeled maxima.
Consider a leaf $v$ labeled with the regions 
$\mathcal{R}_v = (R_1, R_2, \ldots, R_n)$,
and let $d_v$ denote the depth of $v$.
Conditioned on $P \in \mathcal{R}_v$, the expected
running time of \textup{\textbf{Search}} is 
$O(n + d_v)$.
\end{claim}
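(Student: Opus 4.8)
The plan is to fix the leaf $v$ and condition throughout on $P \in \mathcal{R}_v$; under this conditioning the points $p_1,\dots,p_n$ remain independent, with $p_i$ distributed as $\cD_i$ restricted to $R_i$. The running time of \textbf{Search} (ignoring \textbf{Update}, already handled by \Clm{update}, and the $L(A)$ bookkeeping, handled by \Clm{ds}) is, up to constants, the total number of search steps taken over all points $p_i$ in their trees $T_i$, plus $O(n)$ for the overhead. So it suffices to bound $\sum_i \EX[\text{(steps taken for }p_i)]$ by $O(n + d_v)$. By \Lem{entropyDepth}, $d_v = -\sum_i \log\Pr[p_i \in R_i]$, so it is enough to show that the expected number of steps spent on a single point $p_i$ is $O(1 - \log\Pr[p_i \in R_i])$; summing over $i$ then gives $O(n + d_v)$.

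For a single point $p_i$, I would argue that its search terminates (it is deleted from $L(A)$, or moved to $B$, or dominated by $\hat p$) as soon as it has been located in a slab $S_i \subseteq R_i$, or earlier. Here is the idea: by \Prop{dom} applied to the entropy-sensitive $\bS$-labeled tree $\cT$, if $p_i$ is non-maximal in $\mathcal{R}_v$ then $R_i$ is dominated by some extremal region $R_j$; by the $\bS$-labeled property (\Def{s-label}), either $R_i$ lies in a single leaf slab, or $R_i$ is separated from $R_j$ by a slab boundary. In the first case, once $p_i$ reaches its leaf slab the search stops. In the second case, once $p_i$ is located in a slab contained in $R_i$, the leftmost maximal point $m$ to the right of $S_i$ is at least as far right as $R_j$ (using \Clm{order}, which guarantees the sorted maxima to the right of $\wlambda$ are known by the time $p_i$'s key reaches that range), and $m$ dominates $p_i$, so the check in \textbf{Search} against $\hat p$ terminates the search. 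For a maximal $p_i$, $R_i$ is entirely within one leaf slab by \Def{s-label}, so again the search ends once $p_i$ is placed in that leaf slab. Thus, in every case, \textbf{Search} performs on $p_i$ at most the number of steps needed by $T_i$ to locate $p_i$ inside a slab contained in $R_i$ (plus $O(1)$).

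Now the number of steps $T_i$ takes to locate $p_i$ within an interval (slab) contained in $R_i$ is exactly an $R_i$-restricted search in the sense of \Def{rest}, where the restricted distribution $\cF_{R_i}$ is $\cD_i$ conditioned on $R_i$ — this is an $R_i$-restricted distribution with $q'_r = \Pr_{\cD_i}[p_i = r]$ for $r \in R_i$ and $q'_{R_i} = \Pr_{\cD_i}[p_i \in R_i]$. By \Lem{tree}(ii), each $T_i$ is $O(1/\eps)$-optimal for restricted searches over $\cD_i$, so the expected cost of this restricted search (over $p_i \sim \cF_{R_i}$, i.e.\ exactly the conditional distribution we are working with) is $O(1/\eps)(-\log q'_{R_i} + 1) = O(\eps^{-1}(1 - \log\Pr[p_i \in R_i]))$. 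Summing over all $i$ and adding the $O(n)$ overhead from \Clm{ds} and \Clm{update} gives expected running time $O(\eps^{-1}(n + d_v))$, as claimed (absorbing the $\eps^{-1}$, which is fixed, into the $O(\cdot)$).

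The main obstacle I expect is the termination argument — precisely pinning down that when \textbf{Search} processes $p_i$ after it has entered a slab $S_i \subseteq R_i$, the global state ($\what\lambda$, $\hat p$, the discovered sorted maxima) is already ``advanced enough'' that the domination check fires. This requires carefully combining the invariant of \Clm{order} with the order in which \textbf{Search} pulls points via \emph{find-max}: one must check that a point is never forced to take search steps strictly beyond reaching $R_i$ merely because the maxima to its right have not yet been confirmed. The interleaving of searches with the right-to-left confirmation sweep is exactly what makes this delicate, and handling the boundary case where $R_i$ spans a slab boundary versus sitting inside one leaf slab needs the $\bS$-labeled refinement (\Def{s-label}) to go through cleanly.
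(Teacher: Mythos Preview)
Your approach is the paper's: bound each point's cost by a restricted search in $T_i$, invoke the optimality of $T_i$ for restricted searches, and sum using \Lem{entropyDepth}. There is, however, one technical slip that makes a step fail as written. The notion of an ``$R_i$-restricted search'' does not type-check: in \Def{rest} the interval $S$ must be an interval of the ordered set $\bU$ (here, a slab of $\bS$), whereas $R_i$ is an arbitrary polygonal region in the plane. In particular, if $R_i$ is bounded then no slab (an unbounded vertical strip) is contained in $R_i$, so your termination criterion ``located in a slab $S_i\subseteq R_i$'' may never fire, and the restricted-search lemma cannot be invoked with $S=R_i$.

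The paper's fix is exactly what your intuition is reaching for: let $S_i$ be the smallest \emph{slab} of $\bS$ that contains $R_i$, and argue the algorithm performs at most an $S_i$-restricted search. Once $p_i$'s current slab lies inside $S_i$, \Clm{order} guarantees the maxima to the right of $S_i$---including the dominating $p_j$, which by the $\bS$-labeled property is slab-separated from $R_i$ and hence lies to the right of $S_i$---have already been discovered, so the check against $\hat p$ terminates the search. The conditional distribution of $p_i$ given $p_i\in R_i$ is then a legitimate $S_i$-restricted distribution with $q'_{S_i}=\Pr[p_i\in R_i]$, and the optimality of $T_i$ yields the $O(1-\log\Pr[p_i\in R_i])$ bound you want. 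With this correction your argument matches the paper's, and the ``main obstacle'' you flag dissolves: the interleaving is handled cleanly by \Clm{order} once you work with the enclosing slab $S_i$ rather than $R_i$ itself.
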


\begin{proof} 

For each $R_i$, let $S_i$ be the smallest slab of $\bS$ that completely contains 
$R_i$. We will show that the algorithm
performs at most an $S_i$-restricted search for input $P \in \mathcal{R}_v$. 
If $p_i$ is maximal, then $R_i$ is contained in a leaf slab (this
is because the output is $\bS$-labeled). Hence $S_i$ is a leaf slab
and an $S_i$-restricted search for a maximal $p_i$ is just 
a complete search.

Now consider a non-maximal $p_i$. By the properties of $\bS$-labeled
maxima, the associated region $R_i$ is either inside a leaf slab or
is separated by a slab boundary from the dominating region $R_j$.
In the former case, an $S_i$-restricted search is a complete search.
In the latter case, we argue that an $S_i$-restricted search suffices
to process $p_i$.  This follows
from \Clm{order}: by the time an $S_i$-restricted search
finishes, all maxima to the right of $S_i$ have been determined.
In particular, we have found $p_j$, and thus $\hat p$ dominates $p_i$.
Hence, the search for $p_i$ will proceed no further.


The expected search time taken conditioned on $P \in \mathcal{R}_v$ is the sum 
(over $i$) of the conditional expected $S_i$-restricted search times. 
Let $\cE_i$ denote the event that $p_i \in R_i$, and $\cE$ be the event 
that $P \in \mathcal{R}_v$.  We have $\cE = \bigwedge_i \cE_i$.
By the independence of the distributions and linearity of expectation
\begin{align*} 
&\EX_\cE[\text{search time}]\\ 
&= 
\sum_{i=1}^n \EX_\cE[\text{$S_i$-restricted search time for $p_i$}] \\
& =  \sum_{i=1}^n \EX_{\cE_i} [\text{$S_i$-restricted search time for $p_i$}].
\end{align*}
By \Lem{search-time}, the time for an $S_i$-restricted search 
conditioned on $p_i \in R_i$ is $O(-\log \Pr[p_i \in R_i] + 1)$. 
By \Lem{entropyDepth}, $d_v = \sum_i -\log \Pr[p_i \in R_i]$,
completing the proof.
\end{proof}

We can now prove the main lemma.

\begin{proof}[of \Lem{algo}] 
By \Lem{comp}, there exists
an entropy-sensitive comparison tree $\cT$ that computes the $\bS$-labeled maxima
with expected depth $O(\OPT+n)$.
According to \Clm{search}, the expected running time of {\bf Search}
is $O(\OPT+n)$. \Clm{update} tells us the expected time for {\bf Update}
is $O(n)$, and we add these bounds to complete the proof.
\end{proof}

\section{Data structures obtained during the learning phase}
\label{sec:learning}

Learning the vertical slab structure $\bS$ is very similar to
to learning the $V$-list in Ailon \etal~\cite[Lemma~3.2]{AilonCCLMS11}. 
We repeat the construction
and proof for convenience: take the union of the first 
$k = \log n$ inputs $P_1$, $P_2$, $\ldots$, $P_k$,
and sort those points by $x$-coordinates. This gives a list 
$x_0,x_1,\ldots,x_{nk-1}$. 
Take the $n$ values $x_0,x_k,x_{2k},\ldots,x_{(n-1)k}$. They define
the boundaries for $\bS$. We recall a useful 
and well-known
fact~\cite[Claim~3.3]{AilonCCLMS11}.
\begin{claim}\label{clm:indicator-square}
Let $Z = \sum_i Z_i$ be a sum of nonnegative random variables such that
$Z_i = O(1)$ for all $i$, $\EX[Z] = O(1)$, and for 
all $i,j$, $\EX[Z_iZ_j] = \EX[Z_i]\EX[Z_j]$. Then $\EX[Z^2] = O(1)$.
\end{claim}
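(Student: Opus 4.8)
The plan is to compute $\EX[Z^2]$ directly by expanding the square and splitting it into its diagonal and off-diagonal contributions:
\[
\EX[Z^2] = \EX\Bigl[\Bigl(\textstyle\sum_i Z_i\Bigr)^2\Bigr] = \sum_i \EX[Z_i^2] + \sum_{i \ne j} \EX[Z_i Z_j].
\]
The two sums are bounded by different hypotheses, and the point of the argument is that each hypothesis is used exactly where it is needed.

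For the off-diagonal sum, I would invoke the factorization hypothesis $\EX[Z_i Z_j] = \EX[Z_i]\EX[Z_j]$. Since all $Z_i \ge 0$, this gives $\sum_{i \ne j}\EX[Z_i Z_j] \le \bigl(\sum_i \EX[Z_i]\bigr)^2 = \EX[Z]^2$, which is $O(1)$ by the assumption $\EX[Z] = O(1)$.

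For the diagonal sum I would instead use the uniform boundedness: $Z_i = O(1)$ means $0 \le Z_i \le c_0$ for some absolute constant $c_0$, hence $Z_i^2 \le c_0 Z_i$ pointwise, and therefore $\sum_i \EX[Z_i^2] \le c_0 \sum_i \EX[Z_i] = c_0\,\EX[Z] = O(1)$. Adding the two estimates yields $\EX[Z^2] = O(1)$, as claimed. (Note that the factorization hypothesis is harmless on the diagonal — it would just say $\EX[Z_i^2] = \EX[Z_i]^2$ — but it is the boundedness, not the factorization, that controls $\sum_i \EX[Z_i^2]$.)

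I do not anticipate any real obstacle here: this is a routine second-moment computation, and the only thing worth stating explicitly is the clean division of labor between the boundedness hypothesis (diagonal terms) and the pairwise-factorization hypothesis (off-diagonal terms).
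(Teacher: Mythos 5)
Your proof is correct. The paper itself does not prove this claim --- it cites it as a well-known fact from Ailon et al.\ \cite[Claim~3.3]{AilonCCLMS11} --- but your argument is exactly the standard second-moment expansion one would expect there: split $\EX[Z^2]$ into diagonal and off-diagonal parts, bound the off-diagonal sum by $(\EX[Z])^2$ via the pairwise factorization and nonnegativity, and bound the diagonal sum by $c_0\,\EX[Z]$ via the uniform bound $Z_i \le c_0$ (using $Z_i^2 \le c_0 Z_i$). Your closing remark about the division of labor between the two hypotheses is accurate and worth keeping.
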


%
Now let $\lambda$ be a leaf slab in $\bS$. Recall that we denote
by $X_\lambda$ the number of points of a random input $P$ that end
up in $\lambda$. Using \Clm{indicator-square}, we quickly
obtain the following lemma.

\begin{lemma}\label{lem:leaf-tail}
With probability $1-n^{-3}$ over the construction of $\bS$, we have
$\EX[X_\lambda^2] = O(1)$ for all leaf slabs $\lambda \in \bS$.
\end{lemma}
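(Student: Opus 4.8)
The plan is to write $X_\lambda=\sum_{i=1}^n Z_i$, where $Z_i$ is the indicator of the event $p_i\in\lambda$ for a fresh random input $P\sim\cD$. Because the $\cD_i$ are independent, the $Z_i$ are pairwise independent, and each is bounded by $1$, so they satisfy the hypotheses of \Clm{indicator-square} the moment we know that $\EX[X_\lambda]=\sum_i q(i,\lambda)=O(1)$. (The identity $\EX[Z_iZ_j]=\EX[Z_i]\EX[Z_j]$ is only needed for $i\neq j$; the diagonal contributes $\EX[Z_i^2]=\EX[Z_i]$, which is already accounted for in $\EX[Z]=O(1)$.) Thus the lemma reduces to showing that, with probability $1-n^{-3}$ over the construction of $\bS$, every leaf slab $\lambda$ has $\sum_i q(i,\lambda)\le C$ for a suitable absolute constant $C$.

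To prove this I would introduce a fixed reference slab structure independent of the training sample. Let $F(x)=\sum_{i=1}^n\Pr[x(p_i)\le x]$; since the $\cD_i$ are continuous, $F$ is continuous and nondecreasing with range $[0,n)$, and for any interval $S=(a,b)$ we have $\sum_i q(i,S)=F(b)-F(a)$. Fix grid values $0,C/2,C,3C/2,\dots$, choose for each one an $F$-preimage $a_m$, and let the $a_m$ delimit $O(n/C)=O(n)$ \emph{reference slabs}, each of $F$-length exactly $C/2$. The key geometric observation is that any slab of $F$-length greater than $C$ fully contains at least one reference slab, since an interval of $F$-length exceeding $C=2\cdot(C/2)$ contains two consecutive grid values in its interior, and hence the reference slab between them. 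On the other hand, by construction each leaf slab of $\bS$ contains strictly fewer than $k=\log n$ of the $nk$ training points. Consequently it suffices to show that, with probability $1-n^{-3}$, every reference slab contains at least $\log n$ training points: then no leaf slab can swallow a reference slab, so every leaf slab has $F$-length at most $C$, i.e.\ $\EX[X_\lambda]=O(1)$.

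The final step is a Chernoff bound followed by a union bound. For a fixed reference slab $R$, the number of training points in $R$ is a sum of $nk$ independent indicators with mean $k\cdot(C/2)=(C/2)\log n$; taking $C\ge 4$ makes the mean at least $2\log n$, so the probability that this count falls below $\log n$ is $\exp(-\Omega(C\log n))=n^{-\Omega(C)}$, which is at most $n^{-4}$ once $C$ is a sufficiently large constant. A union bound over the $O(n)$ reference slabs gives total failure probability at most $n^{-3}$, and combining this with the reduction above and \Clm{indicator-square} completes the proof. The only genuine subtlety is the data-dependence of the leaf slabs of $\bS$; the fixed reference grid is precisely what decouples ``a leaf slab is large'' from ``a leaf slab contains few sample points,'' so I expect setting up the grid together with the containment observation to be the main point to get right, while the concentration estimate is routine.
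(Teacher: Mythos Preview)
Your proof is correct, but it follows a genuinely different route from the paper's.

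The paper argues directly with the training sample: for each pair of \emph{indices} $i,j\in\{0,\dots,nk-1\}$ it conditions on the two sample values $x_i,x_j$, observes that the remaining $nk-2$ samples are independent of them, and applies Chernoff to the count $Y_t$ of those samples landing in $t=[x_i,x_j)$. This yields that whenever $\EX[Y_t]>12\log n$ one has $Y_t>\log n$ except with probability $n^{-5}$; a union bound over the $O((n\log n)^2)$ index pairs then gives the $1-n^{-3}$ guarantee. Since every leaf slab $\lambda$ is such an interval with $Y_\lambda\le\log n$, one gets $\EX[Y_\lambda]\le 12\log n$, and because $\EX[Y_\lambda]\ge(\log n-2)\EX[X_\lambda]$ this yields $\EX[X_\lambda]=O(1)$; \Clm{indicator-square} finishes as in your write-up.

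Your approach instead fixes a deterministic reference grid via the population function $F$, reducing the question to a union bound over only $O(n)$ fixed events (``reference cell receives $\ge k$ samples''), with no conditioning on random pairs. This is tidier and makes the decoupling of the data-dependent slabs from the concentration step completely explicit, at the cost of introducing the auxiliary object $F$ and handling minor boundary bookkeeping (non-uniqueness of $F$-preimages where $F$ is flat, and the last partial reference cell). The paper's argument stays closer to the sample but pays with a larger union bound and the slightly delicate ``condition on two points, Chernoff on the rest'' step. Both routes land on $\EX[X_\lambda]=O(1)$ and then invoke \Clm{indicator-square} identically.
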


\begin{proof}
Consider two values $x_i$, $x_j$ from the original
list. Note that all the other $kn - 2$ values are independent of these 
two points.
For every $r \notin \{i,j\}$, let $Y^{(r)}_t$ be the indicator random
variable for $x_r \in t \eqdef [x_i, x_j)$. Let 
$Y_t = \sum_r Y^{(r)}_t$.
Since the $Y^{(r)}_t$'s are independent, by Chernoff's bound~\cite{AlonSp00},
for any $\beta \in (0,1]$,  
\[
\Pr[Y_t \leq (1-\beta)\EX[Y_t]] \leq \exp(-\beta^2\EX[Y_t]/2).
\]
 With probability at least $1 - n^{-5}$, if $\EX[Y_t] > 12\log n$,
then $Y_t > \log n$. By applying the same argument
for any pair $x_i, x_j$ and taking a union bound over all
pairs, with probability at least $1 - n^{-3}$ the following holds:
for any  pair $t$, if $Y_t \leq \log n$, then $\EX[Y_t] \leq 12\log n$. 

For any leaf slab $\lambda = [x_{ak},x_{(a+1)k}]$, we have
$Y_\lambda \leq \log n$.
Let $X^{(i)}_\lambda$ be the indicator
random variable for the event that $x_i \sim \cD_i$
lies in $\lambda$, so that $X_\lambda = \sum_i X^{(i)}_\lambda$.
Since $\EX[Y_\lambda] \geq (\log n - 2) \EX[X_\lambda]$, we get
$\EX[X_\lambda] = O(1)$. By independence of the ${\cD}_i$'s, 
for all $i,j$, $\EX\bigl[X^{(i)}_\lambda X^{(j)}_\lambda\bigr] = 
\EX\bigl[X^{(i)}_\lambda\bigr] \EX\bigl[X^{(j)}_\lambda\bigr]$,
so $\EX[X^2_\lambda] = O(1)$, by \Clm{indicator-square}.
\end{proof}

\Lem{slabstruct} follows immediately from 
\Lem{leaf-tail} and the fact that sorting the
$k$ inputs $P_1$, $P_2$, $\ldots$, $P_k$ takes $O(n \log^2 n)$
time.
After the leaf slabs have been determined,
the search trees $T_i$ can be found using
essentially the same techniques as before~\cite[Section~3.2]{AilonCCLMS11}.
The main idea is to use $n^\eps\log n$ rounds to find the first
$\eps \log n$ levels of $T_i$, and to use
a balanced search tree for searches that need to
proceed to a deeper level. This only costs a factor of $\eps^{-1}$.
We restate \Lem{tree} for convenience.

\begin{lemma}
Let $\eps > 0$ be a fixed parameter.
In $O(n^{\eps})$ rounds and $O(n^{1+\eps})$ time, we can construct search trees
$T_1$, $T_2$, $\ldots$, $T_n$ over $\bS$ such that the following holds.
\textup(i\textup) the trees can be totally represented in $O(n^{1+\eps})$ space;
\textup(ii\textup) probability $1-n^{-3}$ over the construction of the
$T_i$s: every $T_i$ is $O(1/\eps)$-optimal for restricted searches over $\cD_i$.
\end{lemma}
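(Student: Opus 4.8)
The plan is to follow the two-level construction of Ailon \etal~\cite{AilonCCLMS11}: learn the top $\eps\log n$ levels of each $T_i$ accurately from a small sample, and fall back to one globally shared balanced search tree below. Concretely, I would draw $N=\Theta(\eps^{-1}n^{\eps}\log n)$ learning inputs, and for each $i$ use the $N$ samples of $p_i$ to build a (ternary) search tree over $\bS$ of depth $\eps\log n$: starting from the whole plane, at each slab $S$ I choose the boundary of $\bS$ inside $S$ whose two sides carry as balanced an \emph{empirical} $p_i$-mass as possible, isolating a single dominating leaf slab as a third, leaf child when one exists, and recurse for $\eps\log n$ levels. Below every depth-$(\eps\log n)$ node I simply put a pointer into one shared ordinary balanced binary search tree over the $O(n)$ leaf slabs of $\bS$, which has depth $O(\log n)$. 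Each $T_i$ then has only $O(n^{\eps})$ private nodes, and the shared structure has $O(n)$ nodes, giving $O(n^{1+\eps})$ total space (part (i)); after locating the samples in $\bS$ and representing each empirical distribution sparsely, the construction uses $O(n^{\eps})$ rounds and $O(n^{1+\eps})$ time up to polylogarithmic factors that are absorbed by running with parameter $\eps/2$, exactly as in \cite{AilonCCLMS11}.

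The correctness of the learning step is the routine part. A Chernoff bound shows that with $N=\Theta(\eps^{-1}n^{\eps}\log n)$ samples the empirical $p_i$-mass of every prefix of leaf slabs is, for every $i$, within a small multiplicative factor of its true $\cD_i$-mass whenever that mass is at least $n^{-\eps}$; a union bound over the $n$ distributions and the $O(n)$ prefixes keeps the total failure probability below $n^{-3}$. On that event, for every $i$ the learned top $\eps\log n$ levels of $T_i$ are $\mu$-reducing \emph{with respect to $\cD_i$} for a constant $\mu<1$ (say $\mu=3/4$): an empirically balanced split is also $\cD_i$-balanced up to lower-order terms, and each non-leaf child of a node gets at most a $\mu$ fraction of the parent's $\cD_i$-mass (a dominating leaf slab is peeled off as a leaf child, for which the requirement is vacuous). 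I would cite \cite[Section~3.2]{AilonCCLMS11} for this rather than reprove it.

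The substance is the $O(1/\eps)$-optimality for restricted searches, which I would get by comparison with an idealized completion. For each $i$, let $T_i^{\ast}$ be the analysis-only search tree obtained by keeping the learned top $\eps\log n$ levels of $T_i$ and extending every depth-$(\eps\log n)$ node to a full $\mu$-reducing search tree over $\bS$, now using the exact probabilities $q(i,\cdot)$; then $T_i^{\ast}$ is a $\mu$-reducing search tree for $\cD_i$, so by \Lem{search-time} it is $O(1/\log(1/\mu))=O(1)$-optimal for restricted searches over $\cD_i$. Fix an interval $S$ of leaf slabs, an $S$-restricted distribution $\cF_S$ with restricted mass $q'_S$, and a target $j\in S$; write $\nu(j)$, $\nu^{\ast}(j)$ for the lengths of the $S$-restricted searches for $j$ in $T_i$ and $T_i^{\ast}$. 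Since $T_i$ and $T_i^{\ast}$ coincide on the shared top $\eps\log n$ levels, if the $S$-restricted search for $j$ terminates within those levels then $\nu(j)=\nu^{\ast}(j)$; otherwise the search in $T_i^{\ast}$ has already visited $\eps\log n$ nodes, so $\nu^{\ast}(j)\ge\eps\log n$, while in $T_i$ the search enters the shared balanced tree and finishes after $O(\log n)$ more steps, so $\nu(j)\le\eps\log n+O(\log n)=O(\log n)\le O(1/\eps)\cdot\eps\log n\le O(1/\eps)\,\nu^{\ast}(j)$. Hence $\nu(j)\le O(1/\eps)\,\nu^{\ast}(j)$ for every $j\in S$, and averaging over $\cF_S$,
\[
\EX_{\cF_S}[\text{$S$-restricted search time in $T_i$}]\;\le\;O(1/\eps)\,\EX_{\cF_S}[\text{$S$-restricted search time in $T_i^{\ast}$}]\;\le\;O(1/\eps)\,(-\log q'_S+1),
\]
so $T_i$ is $O(1/\eps)$-optimal for restricted searches over $\cD_i$, which is part (ii).

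I expect the main obstacle to be making the comparison with $T_i^{\ast}$ airtight: one must check that "the $S$-restricted search does not terminate in the top $\eps\log n$ levels'' is a well-defined event that behaves identically in $T_i$ and $T_i^{\ast}$ (it is, since those levels are literally shared), and that the shared balanced tree finishes any such search in $O(\log n)$ further steps no matter which depth-$(\eps\log n)$ node it was entered from (true, since it is balanced over all $O(n)$ leaf slabs and the restricted search stops the moment it reaches a node contained in $S$). The secondary nuisance is the bookkeeping of sample size, failure probability, sparse representation of the empirical distributions, and the polylog-absorption, so that $O(n^{\eps})$ rounds, $O(n^{1+\eps})$ time, and $O(n^{1+\eps})$ space all come out cleanly; this is entirely parallel to \cite{AilonCCLMS11}.
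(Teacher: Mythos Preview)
Your approach is essentially the paper's: learn the top $\Theta(\eps\log n)$ levels from $O(n^\eps)$ samples, fall back to a shared balanced tree below, and compare with an idealized completion $T_i^{\ast}$ that is $\mu$-reducing so that \Lem{search-time} applies. The pointwise inequality $\nu(j)\le O(1/\eps)\,\nu^{\ast}(j)$ is correct and is a clean way to phrase the comparison.

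The gap is in the sentence ``the learned top $\eps\log n$ levels of $T_i$ are $\mu$-reducing with respect to $\cD_i$.'' Your Chernoff event only controls slabs whose \emph{true} mass is at least $n^{-\eps}$, but nothing in your construction prevents an internal node $S$ in the top levels from having $q(i,S)\ll n^{-\eps}$: an empirically balanced split gives no lower bound on the true mass of a child, so already at depth one a child can carry arbitrarily little (even zero) $\cD_i$-mass, and you still recurse on it for $\eps\log n$ more levels. At such a low-mass node the empirical split can put essentially all of the remaining true mass on one non-leaf child, so $q(i,S')=q(i,S)$ and the top levels are \emph{not} $\mu$-reducing. Then $T_i^{\ast}$ is not $\mu$-reducing either, \Lem{search-time} does not apply, and the last inequality $\EX_{\cF_S}[\nu^{\ast}(j)]\le O(1)(-\log q'_S+1)$ is unjustified.

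The paper handles exactly this issue, but not by building to a fixed depth. It stops the learned partial tree at any node with fewer than $5\log n$ sample hits; Chernoff then guarantees that every \emph{internal} node of the partial tree has $q(i,S)\ge n^{-\eps}/2$, so the partial tree (and hence the idealized completion $T'_i$) really is $\mu$-reducing. The price is that the partial tree now has leaves at shallow depth, where your pointwise inequality $\nu(j)\le O(1/\eps)\,\nu^{\ast}(j)$ can fail. The paper absorbs this with a short extra case analysis: when $q'_S\le n^{-\eps/2}$ the trivial depth bound suffices, and otherwise the probability (under $\cF_S$) of landing at a shallow non-leaf-slab leaf is at most $n^{-\Omega(\eps)}$, so its $O(\log n)$ cost is negligible. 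You will need either this stopping rule plus the rare-event argument, or an equivalent patch, to close the proof.
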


\begin{proof}
Let $\delta > 0$ be some sufficiently small constant and $c$
be sufficiently large . For $k = c\delta^{-2}n^{\eps}\log n$ rounds and each $p_i$,
we record the leaf slab of $\bS$ that contains it. We break the
proof into smaller claims.

\begin{claim} \label{clm:prob} Using $k$ inputs, we can
compute estimates $\hat{q}(i,S)$ for each index $i$ and slab $S$.
The following guarantee holds (for all $i$ and $S$)
with probability $>1 - 1/n^3$ over the choice of the $k$ inputs. 
If at least $5\log n$ instances of $p_i$ fell in $S$,
then $\hat{q}(i,S) \in  
[(1-\delta)q(i,S),(1+\delta)q(i,S)]$\footnote{We remind the reader that this the
probability that $p_i \in S$.}.
\end{claim}

\begin{proof}
For a slab $S$, let $N(S)$ be the number of times $p_i$
was in $S$, and let $\hat{q}(i,S) = N(S)/k$ be the
empirical probability for this event
($\hat{q}(i,S)$ is an estimate  of 
$q(i,S)$). 
Fix a slab $S$. If $q(i,S) \leq 1/2n^{\eps}$, then
by a Chernoff bound we get
$\Pr[N(S) \geq 5\log n \geq 10kq(i,S)] \leq 2^{-5\log n} = n^{-5}$.
Furthermore, if $q(i,S) \geq 1/2n^{\eps}$, then $q(i,S)k \geq (c/2\delta^2)\log n$ and
$\Pr[N(S) \leq (1-\delta)q(i,S)k] \leq \exp(-q(i,S)\delta^2k/4) \leq  n^{-5}$ as
well as
$\Pr[N(S) \geq (1+\delta)q(i,S)k] \leq \exp(-\delta^2q(i,S)k/4) \leq n^{-5}$.
Thus, by taking a union bound, we get that with probability at
least $1-n^{-3}$ for any slab $S$, if $N(S) \geq 5\log n$,
then $q(i,S) \geq n^{-\eps}/2$ and 
hence $\hat{q}(i,S) \in  [(1-\delta)q(i,S),(1+\delta)q(i,S)]$.
\end{proof}

We will henceforth assume that this claims holds for all $i$ and $S$.
Based on the values $\hat{q}(i,S)$, we construct the search trees.
The tree $T_i$ is constructed recursively.
We will first create a partial search tree, where
some searches may end in non-leaf slabs (or, in other words,
leaves of the tree may not be leaf slabs).
The root is the just the largest slab. Given a slab
$S$, we describe how the create the sub-tree of $T_i$
rooted at $S$. If $N(S) < 5\log n$, then we make
$S$ a leaf. Otherwise, we pick a leaf slab $\lambda$ such that for the slab $S_l$
consisting of all leaf slabs (strictly) to the left of $\lambda$ and
the slab $S_r$ consisting of all leaf slabs (strictly) to the right
of $\lambda$ we have $\hat{q}(i,S_l) \leq (2/3)\hat{q}(i,S)$ and
$\hat{q}(i,S_r) \leq (2/3)\hat{q}(i,S)$. We make $\lambda$ a leaf child of $S$. Then we
recursively create trees for $S_l$ and $S_r$ and attach them as children to $S$. 
For any internal node of the tree $S$, we have $q(i,S) \geq n^\eps/2$,
and hence the depth is at most $O(\eps \log n)$. Furthermore,
this partial tree is $\beta$-reducing (for some constant $\beta$).
The partial tree $T_i$ is extended to a complete tree
in a simple way. From each $T_i$-leaf that is not a leaf slab, we perform
a basic binary search for the leaf slab. 
This yields a tree $T_i$ of depth at most $(1+O(\eps))\log n$.
Note that we only need to store the partial $T_i$ tree, and
hence the total space is $O(n^{1+\eps})$.

Let us construct, as a thought experiment,
a related tree $T'_i$. Start with the partial $T_i$.
For every leaf that is not a leaf slab, extend it downward
using the true probabilities $q(i,S)$. In other words,
let us construct the subtree rooted at a new node $S$
in the following manner. We pick a leaf slab $\lambda$ such that
$q(i,S_l) \leq (2/3)q(i,S)$ and
$q(i,S_r) \leq (2/3)q(i,S)$ (where $S_l$ and $S_r$ are as defined
above). This ensures that $T'_i$ is $\beta$-reducing.
By \Lem{search-time}, $T'_i$ is $O(1)$-optimal
for restricted searches over $\cD_i$ (we absorb the $\beta$
into $O(1)$ for convenience).

\begin{claim} \label{clm:1/eps} The tree $T_i$ is $O(1/\eps)$-optimal
for restricted searches. 
\end{claim}

\begin{proof}
Fix a slab $S$ and an $S$-restricted
distribution $\cD_S$. Let $q'(i,\lambda)$ (for each leaf slab $\lambda$) 
be the series of values defining $\cD_S$. Note that $q'(i,S) \leq q(i,S)$.
Suppose $q'(i,S) \leq n^{-\eps/2}$. Then $-\log q'(i,S) \geq \eps (\log n)/2$.
Since any search in $T_i$ takes at most $(1+O(\eps))\log n$ steps,
the search time is at most $O(\eps^{-1}(-\log q'(i,S) + 1))$. 

Suppose $q'(i,S) > n^{-\eps/2}$. Consider a single search for some $p_i$.
We will classify this search based on the leaf of the partial tree
that is encountered. By the construction of $T_i$, any leaf $S'$ is either a leaf slab or
has the property that $q(i,S') \leq n^{-\eps}/2$.
The search is of \emph{Type 1} if the leaf
of the partial tree actually represents a leaf slab (and hence
the search terminates). The search is of \emph{Type 2} (resp. \emph{Type 3}) if the 
leaf of the partial tree is a slab $S$ is an internal node of $T_i$
and the depth is at least (resp. less than) $\eps(\log n)/3$. 

When the search is of Type 1, it is identical in both $T_i$ and $T'_i$. When
the search is of Type 2, it takes at $\eps(\log n)/3$ in $T'_i$ and at most
(trivially) $(1+O(\eps))(\log n)$ in $T_i$. The total number of leaves (that are not leaf slabs)
of the partial tree at depth less than $\eps(\log n)/3$ is at most $n^{\eps/3}$.
The total probability mass of $\cD_i$ inside such leaves is at most $n^{\eps/3}\times n^{-\eps}/2 < n^{-2\eps/3}$.
Since $q'(i,S) > n^{-\eps/2}$, in the restricted distribution $\cD_S$, the probability
of a Type 3 search is at most $n^{-\eps/6}$.

Choose a random $p \sim \cD_S$.
Let $\cE$ denote the event that a Type 3 search occurs.
Furthermore, let $X_p$ denote the depth of the search in $T_i$ and $X'_p$ denote
the depth in $T'_i$. When $\cE$ does not occur, we have argued that $X_p \leq O(X'_p/\eps)$.
Also, $\Pr(\cE) \leq n^{-\eps/6}$. The expected search time is just $\EX[X_p]$.
By Bayes' rule,
\begin{align*}
	\EX[X_p] & =  \Pr(\overline{\cE}) \EX_{\overline{\cE}}[X_p] + \Pr(\cE)\EX_{\cE}[X_p] \\
	& \leq  O(\eps^{-1}\EX_{\overline{\cE}}[X'_p]) +  n^{-\eps/6}(1+O(\eps))\log n \\
	\EX[X'_p] & =  \Pr(\overline{\cE}) \EX_{\overline{\cE}}[X'_p] + \Pr(\cE)\EX_{\cE}[X_p] \\
	  \Longrightarrow\quad &\EX_{\overline{\cE}}[X'_p] \leq \EX[X'_p]/\Pr(\overline{\cE}) \leq 2\EX[X'_p] 
\end{align*}
Combining, the expected search time is $O(\eps^{-1}(\EX[X'_p] + 1))$. Since $T'_i$
is $O(1)$-optimal for restricted searches, $T_i$ is $O(\eps^{-1})$-optimal.
\end{proof}

\end{proof}



%

%

\section{Acknowledgments}
C. Seshadhri was funded by the Early-Career LDRD program at Sandia National Laboratories.
Sandia National Laboratories is a multi-program laboratory managed and operated by Sandia Corporation, a wholly owned subsidiary of Lockheed Martin Corporation, for the U.S. Department of Energy's National Nuclear Security Administration under contract DE-AC04-94AL85000.

We would like to thank Eden Chlamt\'{a}\v{c} for suggesting a simple proof for 
\Clm{YvBound}.

\bibliographystyle{alpha}
\bibliography{si-extrema-full}

\full{
\appendix

\section{Restricted searches}\label{app:restricted}

\begin{lemma} 
Given an interval $S$ in $\bU$, 
let $\cF_S$ be an $S$-restricted distribution of $\cF$. Let $T$
be a $\mu$-reducing search tree for $\cF$.
Conditioned on $j$
drawn from $\cF_S$, the expected time of an $S$-restricted search in 
$T$ for $j$ is at most $(b/\log(1/\mu))(-\log q'_S + 1)$ (for some absolute constant $b$).
\end{lemma}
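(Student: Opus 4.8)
The plan is to control $\Pr[D_j \ge k]$ for each $k$, where $D_j$ denotes the depth at which the $S$-restricted search for $j$ halts, and then sum over $k$. Fix $S$ and an $S$-restricted distribution $\cF_S$ (so $q'_r \le q_r$ for $r \in S$ and $q'_r = 0$ otherwise), and let $v_0, v_1, \dots$ be the root-to-leaf path traced in $T$ while searching for $j$, with $I_{v_d}$ the interval of $v_d$. Since these intervals are nested and the terminal leaf is $\{j\}\subseteq S$, the restricted search stops exactly at $D_j = \min\{d : I_{v_d}\subseteq S\}$; by nestedness, $D_j \ge k$ iff the path reaches depth $k-1$ and $I_{v_{k-1}}\not\subseteq S$. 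In that event, since $j\in I_{v_{k-1}}$ and $j\in S$, the interval $I_{v_{k-1}}$ meets both $S$ and its complement.

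The heart of the argument is a combinatorial observation. At any fixed depth $d$, the intervals of the depth-$d$ nodes of $T$ are pairwise disjoint, as they correspond to disjoint subtrees. Since $S$ is itself an interval, at most two depth-$d$ intervals can meet both $S$ and its complement: at most one can contain the left endpoint of $S$ together with the element just outside it, and at most one can do the analogous thing at the right endpoint. Call these (at most two) nodes the \emph{straddlers} at depth $d$. A straddler's interval has at least two elements, so a straddler is an internal node; consequently all its ancestors are internal too, and chaining the $\mu$-reducing inequality along its root-path (using $q_{\text{root}}\le 1$) gives $q_N \le \mu^{d}$ for any straddler $N$ at depth $d$. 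Whenever $D_j \ge k$, the node $v_{k-1}$ is a straddler at depth $k-1$, so, using $q'_{I_N\cap S}\le q_{I_N}$ and that there are at most two straddlers,
\[
\Pr_{j\sim\cF_S}[D_j \ge k] \;\le\; \sum_{N\text{ straddler at depth }k-1}\frac{q'_{I_N\cap S}}{q'_S} \;\le\; \frac{2\mu^{k-1}}{q'_S},
\]
and of course $\Pr[D_j\ge k]\le 1$ always.

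It then remains to sum: $\EX_{j\sim\cF_S}[D_j] = \sum_{k\ge1}\Pr[D_j\ge k] \le \sum_{k\ge1}\min\bigl(1,\,2\mu^{k-1}/q'_S\bigr)$. Splitting at the first index $k^\star$ where $2\mu^{k^\star-1}/q'_S$ drops below $1$, the clipped-at-$1$ part contributes at most $k^\star \le \log_{1/\mu}(2/q'_S) + O(1) = \tfrac{-\log q'_S + 1}{\log(1/\mu)} + O(1)$, and the remaining geometric tail sums to $O\!\bigl(1/(1-\mu)\bigr)$. Since $-\log q'_S + 1 \ge 1$ and, in the regime of interest ($\mu$ a fixed constant), both $O(1)$ and $1/(1-\mu)$ are $O(1/\log(1/\mu))$, this gives $\EX_{j\sim\cF_S}[D_j] \le \frac{b}{\log(1/\mu)}(-\log q'_S + 1)$ for an absolute constant $b$, as claimed.

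I expect the main obstacle to be the straddler claim together with the geometric estimate it feeds into, namely verifying that a depth-$d$ straddler really has $q$-measure at most $\mu^{d}$. This relies on the fact that every ancestor of an internal node of $T$ is internal, so that the $\mu$-reducing inequality may be applied at \emph{every} edge of the root-path of a straddler; the disjointness-of-depth-$d$-intervals and the at-most-two-straddlers counting are then straightforward, and the concluding summation is routine.
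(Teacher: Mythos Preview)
Your argument is correct and takes a genuinely different route from the paper's. The paper proceeds by induction on the tree: it defines $\vis(V)$ as the expected number of nodes visited below $V$ (conditioned on reaching $V$) and shows inductively that $\vis(V)\le c_1+c\log(p_V/p'_V)$, with a separate case analysis for the unique node $R$ whose two internal children both have nontrivial intersection with $S$. Your approach sidesteps this induction entirely: you bound the tail $\Pr[D_j\ge k]$ directly via the ``straddler'' observation (at each depth at most two disjoint intervals can meet both $S$ and its complement, and each such interval, being internal, has $q$-measure at most $\mu^{d}$ by chaining the $\mu$-reducing inequality along its root path), and then sum. This is cleaner and more elementary; in particular it avoids the delicate bookkeeping in the paper's Claims~WA and~WR and the special treatment of the branching node $R$. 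What the paper's route buys is a slightly finer structural picture of the visited-node subtree (the single branching point $R$), which your argument does not need but which is conceptually nice.

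One small remark: your final absorption step, where you fold the additive $O(1)$ and $1/(1-\mu)$ terms into $\tfrac{b}{\log(1/\mu)}(-\log q'_S+1)$, requires $\mu$ bounded away from $0$ (you flag this with ``$\mu$ a fixed constant''); the same additive constant appears in the paper's final bound, so this is not a discrepancy with the paper but simply a feature of the statement as used.
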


Now that we may assume that we are comparing against an entropy
sensitive comparison tree, we need to think about how to make our
searches entropy-sensitive. For this we proceed as follows.
By \Lem{slabstruct}, we have a vertical slab structure
$\bS$ such that each leaf slab contains only constantly many
points in expectation. Now, for each distribution $\cD_i$, 
we construct an optimal search tree $T_i$ for the leaf
slabs of $\bS$. 
The recursion
continues until $S_l$ or $S_r$ are empty. The search in
$T_i$ proceeds in the obvious way. To find the leaf
slab containing $p_i$, we begin in at the root and
check whether $p_i$ is contained in the corresponding
leaf slab. If yes, the search stop. Otherwise, we branch
to the appropriate child and continue.

Each node in $T_i$ corresponds to a slab in $\bS$, and
it is easily seen that if a node has depth $d$, then
$p_i$ is contained in the corresponding slab with 
probability at most $2^{-d}$. From this, it
quickly follows that $T_i$ is an asymptotically optimal
search tree for $\cD_i$. However, below we require a stronger
result.  Namely, we need a technical lemma showing how an optimal
search tree for some distribution $\cF$ is also useful
for some conditional distributions. 

Let $\bU$ be an ordered set and $\cF$ be a distribution
over $\bU$. For any element $j \in \bU$, we let $p_j$
denote the probability of $j$ in $\cF$. For any interval
$S$ of $\bU$, the total probability of $S$ is $p_S$.

Let $T$ be a search tree over $\bU$ with the
following properties.
For any internal node $S$ and a non-leaf child $S'$,
$p_{S'} \leq \mu p_S$. As a result, if $S$ has depth $k$,
then $p_S \leq \mu^k$.
Every internal node of $T$ has at most $2$ internal children
and at most $2$ children that are leaves.

\begin{definition} \label{def:rest-app}
Given an distribution $\cF$ and interval $S$, an \emph{$S$-restricted 
distribution $\cF_S$} is a conditional distribution of $\cF$ such that $i$ 
chosen from $\cF_S$ always falls in $S$.
\end{definition}

For any $S$-restriction $\cF_S$ of $\cF$, there exist
values $p'_j$ with the following properties.
For each $j \in S$, $p'_j \leq p_j$.
For every other $j$, $p'_j = 0$. 
The probability
of element $j$ in $\cF_S$ is $p'_j / \sum_r p'_r$. 
Henceforth, we will use the primed values to denote
the probabilities in $\cF_S$.
For interval $R$,
we set $p'_R = \sum_{r \in R} p'_r$.
Suppose we perform a search for $j \in S$.
This search is called \emph{$S$-restricted} if it terminates once we locate
$j$ in any interval contained in $S$. 

\begin{lemma} \label{lem:search-time-app} Given an interval $S$ in $\bU$, 
let $\cF_S$ be an $S$-restricted distribution.
Conditioned on $j$
drawn from $\cF_S$, the expected time of an $S$-restricted search in 
$T$ for $j$ is $O(-\log p'_S + 1)$.
\end{lemma}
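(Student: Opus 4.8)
The $S$-restricted search for $j$ follows a prefix of the (unique) root-to-leaf path $\pi_j$ that an ordinary search for $j$ in $T$ would trace, stopping at the first node $v$ of $\pi_j$ whose interval $S_v$ satisfies $S_v\subseteq S$. Write $S=[a,b]$ as an interval of $\bU$, and call a node $v$ \emph{left-crossing} if the left endpoint of $S_v$ lies strictly to the left of $a$, and \emph{right-crossing} if the right endpoint of $S_v$ lies strictly to the right of $b$. A node on $\pi_j$ fails to be contained in $S$ exactly when it is left-crossing or right-crossing, and since the intervals along $\pi_j$ are nested, the left-crossing nodes of $\pi_j$ form a (possibly empty) prefix of $\pi_j$, and likewise the right-crossing ones. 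Hence the $S$-restricted search stops at depth $\max(m_L,m_R)$, where $m_L,m_R$ count the left- resp.\ right-crossing nodes on $\pi_j$, and it suffices to bound $\EX_{j\sim\cF_S}[m_L]$ (and symmetrically $\EX[m_R]$) by $O(-\log p'_S+1)$ and add $1$. We may assume w.l.o.g.\ that each leaf of $T$ is a single element of $\bU$, so a leaf can never strictly contain $a$ together with a second element.

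\textbf{Key structural step.} Any left-crossing node $v$ that can appear on $\pi_j$ for some $j\in S$ contains $j$ (a point $\ge a$) and also a point $<a$; hence $S_v$ contains the leftmost element $a$ of $S$. Therefore all such nodes pairwise overlap, so in the tree $T$ they lie along a single downward chain $v^{(0)},v^{(1)},\ldots$, which (if non-empty) starts at the root. This chain is \emph{contiguous}, i.e.\ each $v^{(t+1)}$ is a child of $v^{(t)}$: any strict intermediate node would again contain $a$, have left endpoint no larger than that of $v^{(t+1)}$ (hence $<a$), and thus be left-crossing itself. For $t\ge1$ the node $v^{(t)}$ is a non-leaf child of $v^{(t-1)}$ (it has at least two elements), so the $\mu$-reducing property gives $p_{S_{v^{(t)}}}\le\mu\,p_{S_{v^{(t-1)}}}$; since $p_{S_{v^{(0)}}}\le1$ and $v^{(t)}$ sits at depth exactly $t$, we obtain $p_{S_{v^{(t)}}}\le\mu^{t}$ (here $p_{S_{v}}$ denotes $\cF$-mass).

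\textbf{Bounding the expectation.} The node $v^{(t)}$ lies on $\pi_j$ iff $j\in S_{v^{(t)}}$, so $\EX_{j\sim\cF_S}[m_L]=\sum_{t\ge0}\Pr_{\cF_S}[j\in S_{v^{(t)}}]=\sum_{t\ge0} p'_{S_{v^{(t)}}}/p'_S$. Now $p'_{S_{v^{(t)}}}=\sum_{r\in S_{v^{(t)}}\cap S}p'_r\le p'_S$, and also $p'_{S_{v^{(t)}}}\le p_{S_{v^{(t)}}}\le\mu^{t}$ using $p'_r\le p_r$; hence each term is at most $\min(1,\mu^{t}/p'_S)$. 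Splitting the sum at $t_0=\lceil\log_{1/\mu}(1/p'_S)\rceil$: the terms with $t<t_0$ are each $\le1$ and contribute at most $t_0\le\log_{1/\mu}(1/p'_S)+1$, while the terms with $t\ge t_0$ form a geometric series bounded by $\mu^{t_0}/(p'_S(1-\mu))\le 1/(1-\mu)$. Thus $\EX[m_L]\le\frac{-\log p'_S}{\log(1/\mu)}+1+\frac{1}{1-\mu}=O(-\log p'_S+1)$, the hidden constant depending only on $\mu$ (both $1/\log(1/\mu)$ and $1/(1-\mu)$ being bounded for $\mu$ bounded away from $1$). The right-crossing estimate is identical with $b$ in place of $a$; adding the two bounds and the final $+1$ proves the lemma.

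\textbf{Main obstacle.} The only genuinely non-obvious point is the structural claim that the crossing nodes relevant to $S$ form a contiguous parent--child chain (so that depth $t$ forces $\cF$-mass $\le\mu^{t}$); the rest is a routine geometric-series estimate. The minor technical care is ensuring the $\mu$-reducing inequality applies at every step of the chain, which is why we note that search leaves are single slabs and hence cannot be crossing.
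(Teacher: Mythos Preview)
Your proof is correct, and it takes a genuinely different (and cleaner) route than the paper's.

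The paper argues by induction on the distance from a leaf: it classifies intersections $S\cap V$ as \emph{trivial} or \emph{anchored}, observes that along any search path there is at most one node $R$ whose two internal children both have nontrivial intersection with $S$, and then proves inductively that the expected number of visited nodes below $V$ is at most $c_1 + c\log(p_V/p'_V)$. Two separate claims handle the single-child case and the branching node $R$, with some care in choosing the constants $c,c_1$.

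Your argument sidesteps the induction entirely. By writing the stopping depth as $\max(m_L,m_R)$ and noting that the left-crossing (resp.\ right-crossing) nodes all contain the left (resp.\ right) endpoint of $S$, you get two root-anchored contiguous chains with $\cF$-mass decaying as $\mu^t$; the expectation then reduces to a single geometric-series estimate. This is shorter and exposes the mechanism more transparently. Conceptually, your two chains correspond to the two anchored branches below the paper's node $R$, but you never need to single out $R$ or run an induction.

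One minor quibble: the ``w.l.o.g.\ leaves are singletons'' step is not quite free, because subdividing a leaf $L$ turns it into an internal node, and then $L$ becomes a \emph{non-leaf} child of its parent without the $\mu$-reducing guarantee. The clean fix is simply to note that in the original tree the left-crossing chain can contain at most one leaf (its last node), so $p_{S_{v^{(t)}}}\le\mu^{\max(t-1,0)}$ suffices, costing only an additive $O(1)$ in the final bound. This does not affect the argument.
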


\begin{proof}
We bound the number of visited nodes in an $S$-restricted search. 
We will prove, by induction 
on the distance from a leaf, that for all
visited nodes $V$ with $p_V \le 1/2$, the expected number of visited nodes 
below $V$ is
$c_1 + c\log(p_V/p'_V),$
for constants $c,c_1$. This bound clearly holds for leaves. Moreover, since 
for $V$ at depth $k$, $p_V \le \mu^k$, we have $p_V \le 1/2$ for all but 
the root and at most $1/\log(1/\mu)$ nodes below it on the search path.

We now examine all possible paths down $T$ that an $S$-restricted search
can lead to.
It will be helpful to consider the possible ways that $S$ can intersect the
nodes (intervals) that are visited in a search.
Say that the intersection $S \cap V$ of $S$ with interval $V$ is \emph{trivial}
if it is either empty, $S$, or $V$. Say that it is \emph{anchored} if it shares
at least one boundary line with $S$. 
Suppose $S \cap V = V$. Then the search will terminate at $V$, since we have
certified that $j \in S$. Suppose $S \cap V = S$, so $S$ is contained in $V$.
There can be at most one child of $V$ that contains $S$. If such a child exists,
then the search will simply continue to this child. If not, then all possible
children (to which the search can proceed to) are anchored. The search
can possibly continue to any child, at most two of which are internal nodes.
Suppose $V$ is anchored. Then at most one child of $V$ can be anchored with $S$.
Any other child that intersects $S$ must be contained in it. Refer to Figure~\ref{fig:anchored}.
%
\begin{figure}
\begin{center}
\includegraphics{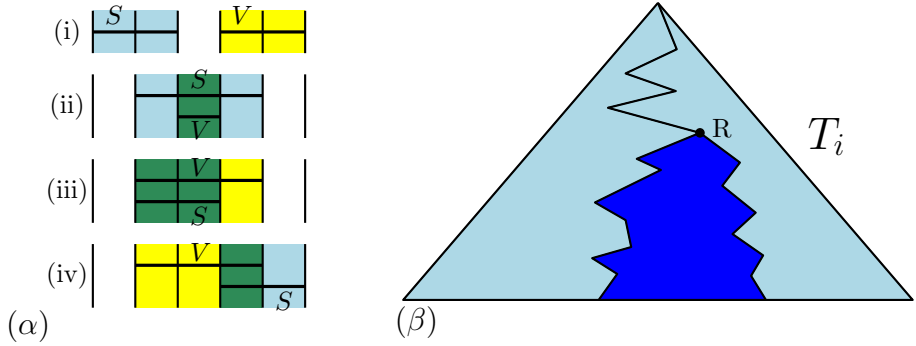}
\end{center}
\caption{($\alpha$) The intersections $S \cap V$ in (i)-(iii) are trivial, the
intersections in (iii) and (iv) are anchored; ($\beta$) every node of $T_i$
has at most one non-trivial child, except for $R$.}
\label{fig:anchored}
\end{figure}

Consider the set of all possible nodes that can be visited by an $S$-restricted search
(remove all nodes that are terminal, i.e., completely contained in $S$). These 
form a set of paths, that form some subtree of $S$. In this subtree, there is only one possible node
that has two children. This comes from some node $R$ that contains $S$ and has
two anchored (non-leaf) children. Every other node of this subtree has
a single child. Again, refer to Figure~\ref{fig:anchored}.
%

From the above, it follows that for all visited nodes $V$ with $V\ne R$, there
is at most one child $W$ whose intersection with $S$ is neither empty nor
$W$. Let $\vis(V)$ be the expected number of nodes visited below $V$, conditioned
on $V$ being visited. 
We have $\vis(V) \le 1 + \vis(W)p'_W/p'_V$, using the fact that when a search
for $j$ shows that it is contained in a node contained in $S$, the $S$-restricted
search is complete.

\begin{claim}\label{clm:WA}
For $V,W$ as above, with $p_V\le 1/2$,
if $\vis(W)\le c_1 + c\log(p_W/p'_W)$,
then for $c\ge c_1/\log(1/\mu)$,
with $\mu\in (0,1)$,
\begin{align} \label{T_i recur}
\vis(V) \le 1 + c\log(p_V/p'_V).
\end{align}
\end{claim}

\begin{proof}
By hypothesis, using $p_W\le \mu p_V$, and letting $\beta := p'_W/p'_V\le 1$,
$\vis(V)$ is no more than
\begin{multline*}
1 + (c_1 + c\log (p_W/p'_W))p'_W/p'_V 
\le  1 + (c_1 + c\log (p_V/p'_W) + c\log(\mu))\beta  \\ 
=  1 + c_1\beta + c\log(p_V)\beta 
 + c\log(1/p'_W)\beta + c\log(\mu)\beta.
\end{multline*}
The function $x\log(1/x)$ is increasing in the range $x \in (0,1/2)$. Hence,
$p'_W\log(1/p'_W)\le p'_V\log(1/p'_V)$ for  $p'_V\le p_V\le 1/2$. Since
$\beta\le 1$, we have
\begin{multline*}
\vis(V) \le  1 + c_1\beta + c\log (p_V) 
  + c\log(1/p'_V) + c\log(\mu)\beta \\
   =  1 + c\log (p_V/p'_V) + \beta(c_1 + c\log(\mu)) 
   \leq  1 + c\log (p_V/p'_V),
\end{multline*}
for $c \ge c_1/\log(1/\mu)$.
\end{proof}

Only a slightly weaker statement can be made for the node $R$ having two
nontrivial intersections at child nodes $R_1$ and $R_2$.

\begin{claim}\label{clm:WR}
For $R,R_1,R_2$ as above, if $\vis(R_i)\le c_1 + c\log (p_{R_i}/p'_{R_i})$, 
for $i=1,2$, then for $c\ge c_1/\log(1/\mu)$,
\[
\vis(R) \le 1 + c\log(p_R/p'_R) + c.
\]
\end{claim}

\begin{proof}
We have
\[
\vis(R) \le 1 + \vis(R_1)p'_{R_1}/p'_R + \vis(R_2)p'_{R_2}/p'_R.
\]
Let $\beta := (p'_{R_1} + p'_{R_2})/p'_R$. With the given bounds for $\vis(R_i)$,
then using $p_{R_i}\le \mu p_R$, $\vis(R)$ is bounded by
\begin{multline*}
1 + \sum_{i=1,2} 
   [c_1 + c\log (p_{R_i}/p'_{R_i})]p'_{R_i}/p'_R \\
\leq  1 + c_1\beta + c\beta\log(\mu) + c\beta\log(p_R) 
  + c\sum_{i=1,2} (p'_{R_i}/p'_R)\log(1/p'_{R_i}).
\end{multline*}
The sum takes its maximum value when each $p'_{R_i} = p'_R/2$, yielding
\begin{multline*}
\vis(R) 
\leq 1 + c_1\beta + c\beta\log(\mu) + c\beta\log(p_R) + c\beta\log(2/p'_R)\\
\leq 1 + c\log(p_R/p'_R) +  \beta(c_1 + c\log(\mu)) + c\log(2)
\leq 1 + c\log(p_R/p'_R) + c\log(2),
\end{multline*}
for $c\ge c_1/\log(1/\mu)$, as in (\ref{T_i recur}), except for the addition of $c\log 2 = c$.
\end{proof}

Now to complete the proof of \Lem{search-time}. For the visited nodes
below $R$, we may inductively take $c_1 = 1$ and $c=1/\log(1/\mu)$, using 
\Clm{WA}.
The hypothesis of \Clm{WR}  then holds for $R$.
For the visited node just above $R$, we may apply \Clm{WA} with
$c_1 = 1 + 1/\log(1/\mu)$ and $c\ge c_1/\log(1/\mu)$. The result is that 
for the node
$V$ just above $R$, $\vis(V)\le 1 + c\log(p_1/p'_V)$. This bound holds 
then inductively
(with the given value of $c$) for nodes further up the
tree, at least up until the $1+1/\log(1/\mu)$ top nodes. 
For the root $Q$, note that $p'_Q = p'_S$.
Thus the
expected number of visited nodes below $Q$ is at most
$1/\log(1/\mu) + 1 + c\log(p_Q/p'_Q)
    = O(1 - \log(p'_S)),$
as desired.
\end{proof}
}

\end{document}